\numberwithin{equation}{section}
\newtheorem{theorem}{Theorem}[section]
\newtheorem{lemma}[theorem]{Lemma}
\newtheorem{proposition}[theorem]{Proposition}
\theoremstyle{definition}
\newtheorem{definition}[theorem]{Definition}
\newtheorem{remark}[theorem]{Remark}
\newtheorem{example}[theorem]{Example}
\newcommand{\Id}{\mathbbmss{1}}
\newcommand{\Bsigma}{\mbox{\boldmath$\sigma$}}
\newcommand{\InHom}{\mbox{$\underline{\Hom}$}}
\newcommand{\rmd}{\textnormal{d}}
\newcommand{\rme}{\textnormal{e}}
\DeclareMathOperator{\Vect}{Vect}
\DeclareMathOperator{\Ber}{Ber}
\DeclareMathOperator{\Hom}{Hom}
\DeclareMathOperator{\tr}{tr}
\font\black=cmbx10 \font\sblack=cmbx7 \font\ssblack=cmbx5 \font\blackital=cmmib10  \skewchar\blackital='177
\font\sblackital=cmmib7 \skewchar\sblackital='177 \font\ssblackital=cmmib5 \skewchar\ssblackital='177
\font\sanss=cmss10 \font\ssanss=cmss8 %scaled 900
\font\sssanss=cmss8 scaled 600 \font\blackboard=msbm10 \font\sblackboard=msbm7 \font\ssblackboard=msbm5
\font\caligr=eusm10 \font\scaligr=eusm7 \font\sscaligr=eusm5  \font\fraktur=eufm10
\font\sfraktur=eufm7 \font\ssfraktur=eufm5 
\font\bsymb=cmsy10 scaled\magstep2
\def\all#1{\setbox0=\hbox{\lower1.5pt\hbox{\bsymb
       \char"38}}\setbox1=\hbox{$_{#1}$} \box0\lower2pt\box1\;}
\def\exi#1{\setbox0=\hbox{\lower1.5pt\hbox{\bsymb \char"39}}
       \setbox1=\hbox{$_{#1}$} \box0\lower2pt\box1\;}
\def\tx#1{{\fam0\relax#1}}
\def\sss#1{{\fam\ssfam\relax#1}}
\def\hpb#1{\setbox0=\hbox{${#1}$}
    \copy0 \kern-\wd0 \kern.2pt \box0}
\def\vpb#1{\setbox0=\hbox{${#1}$}
    \copy0 \kern-\wd0 \raise.08pt \box0}
\def\pmb#1{\setbox0\hbox{${#1}$} \copy0 \kern-\wd0 \kern.2pt \box0}
\def\pmbb#1{\setbox0\hbox{${#1}$} \copy0 \kern-\wd0
      \kern.2pt \copy0 \kern-\wd0 \kern.2pt \box0}
\def\pmbbb#1{\setbox0\hbox{${#1}$} \copy0 \kern-\wd0
      \kern.2pt \copy0 \kern-\wd0 \kern.2pt
    \copy0 \kern-\wd0 \kern.2pt \box0}
\def\pmxb#1{\setbox0\hbox{${#1}$} \copy0 \kern-\wd0
      \kern.2pt \copy0 \kern-\wd0 \kern.2pt
      \copy0 \kern-\wd0 \kern.2pt \copy0 \kern-\wd0 \kern.2pt \box0}
\def\pmxbb#1{\setbox0\hbox{${#1}$} \copy0 \kern-\wd0 \kern.2pt
      \copy0 \kern-\wd0 \kern.2pt
      \copy0 \kern-\wd0 \kern.2pt \copy0 \kern-\wd0 \kern.2pt
      \copy0 \kern-\wd0 \kern.2pt \box0}
\mathchardef\za="710B  %\alpha
\mathchardef\zb="710C  %\beta
\mathchardef\zg="710D  %\gamma
\mathchardef\zd="710E  %\delta
\mathchardef\zve="710F %\epsilon
\mathchardef\zz="7110  %\zeta
\mathchardef\zh="7111  %\eta
\mathchardef\zvy="7112 %\theta
\mathchardef\zi="7113  %\iota
\mathchardef\zk="7114  %\kappa
\mathchardef\zl="7115  %\lambda
\mathchardef\zm="7116  %\mu
\mathchardef\zn="7117  %\nu
\mathchardef\zx="7118  %\xi
\mathchardef\zp="7119  %\pi
\mathchardef\zr="711A  %\rho
\mathchardef\zs="711B  %\sigma
\mathchardef\zt="711C  %\tau
\mathchardef\zu="711D  %\upsilon
\mathchardef\zvf="711E %\phi
\mathchardef\zq="711F  %\chi
\mathchardef\zc="7120  %\psi
\mathchardef\zw="7121  %\omega
\mathchardef\ze="7122  %\varepsilon
\mathchardef\zy="7123  %\vartheta
\mathchardef\zf="7124  %\varomega
\mathchardef\zvr="7125 %\varrho
\mathchardef\zvs="7126 %\varsigma
\mathchardef\zf="7127  %\varphi
\mathchardef\zG="7000  %\Gamma
\mathchardef\zD="7001  %\Delta
\mathchardef\zY="7002  %\Theta
\mathchardef\zL="7003  %\Lambda
\mathchardef\zX="7004  %\Xi
\mathchardef\zP="7005  %\Pi
\mathchardef\zS="7006  %\Sigma
\mathchardef\zU="7007  %\Upsilon
\mathchardef\zF="7008  %\Phi
\mathchardef\zW="700A  %\Omega
\mathchardef\zC="7009  %\Psi
\newcommand{\be}{\begin{equation}}
\newcommand{\ee}{\end{equation}}
\newcommand{\bea}{\begin{eqnarray}}
\newcommand{\eea}{\end{eqnarray}}
\def\*{{\textstyle *}}
\newcommand{\R}{{\mathbb R}}
\newcommand{\Z}{{\mathbb Z}}
\newcommand{\s}{{\textstyle *}}
\def\Hom{\sss{Hom}}
\def\Vect{\sss{Vect}}
\def\xi{\tx{i}}
\def\cM{\cal M}
\def\s*{{\scriptstyle *}}
\def\cO{\mathcal{O}}
\def\cL{\mathcal{L}}
\def\cM{\mathcal{M}}
\newcommand{\beas}{\begin{eqnarray*}}
\newcommand{\eeas}{\end{eqnarray*}}
\def\half{\frac{1}{2}}
\title{$\Z_2\times \Z_2$-graded supersymmetry:  $2$-$\textnormal{d}$ sigma models}
   \author{Andrew James Bruce} 
   \address{Mathematics Research Unit, University of Luxembourg, Maison du Nombre 6, avenue de la Fonte, 
L-4364 Esch-sur-Alzette}  
   \email{andrewjamesbruce@googlemail.com}
\date{\today}
\begin{document}

\begin{abstract}
We propose a natural $\Z_2 \times \Z_2$-graded generalisation of $d=2$, $\mathcal{N}=(1,1)$  supersymmetry and construct a $\Z_2^2$-space realisation thereof. Due to the grading, the supercharges close with respect to, in the classical language, a commutator rather than an anticommutator.  This is then used to build  classical (linear and non-linear) sigma models  that exhibit this novel supersymmetry via mimicking standard superspace methods.  The  fields in our models are  bosons,  right-handed and left-handed Majorana-Weyl spinors, and exotic bosons. The bosons commute with all the fields, the spinors belong to different sectors that cross commute rather than anticommute,  while the exotic boson anticommute with the spinors. As a particular example of one of the models, we present a `double-graded' version of supersymmetric sine-Gordon theory. \par
\smallskip\noindent
{\bf Keywords:} 
$\Z_2\times \Z_2$-graded Lie algebras, supersymmetry, superspace, sigma models \par
\smallskip\noindent
{\bf MSC 2010:}~58A50;~58C50;~17B81;~81T30.
\end{abstract}

 \maketitle

\setcounter{tocdepth}{2}
 \tableofcontents
 
\section{Introduction and Preliminaries} 
\subsection{Introduction}
Inspired by the recent developments in both $\Z_2^n$-geometry (see \cite{Bruce:2019a,Bruce:2019b,Bruce:2019c,Bruce:2020,Covolo:2016,Covolo:2016b}) and the appearance of $\Z_2^n$-gradings in theoretical physics (see \cite{Aizawa:2020a,Aizawa:2020b,Aizawa:2020c,Aizawa:2020d,Bruce:2019aa,Bruce:2020a,Tolstoy:2014a}), we propose a natural generalisation the $d=2$, $\mathcal{N}= (1,1)$ supersymmetry algebra that is $\Z_2\times \Z_2$-graded, or more colloquially, double-graded.  Taking this new $\Z_2^2$-Lie algebra ($\Z_2^2 := \Z_2\times \Z_2$) as our starting point, we develop  `superspace' methods to allow us to define  double-graded versions of  two-dimensional supersymmetric sigma models. Sigma models have a long history, starting with Gell-Mann \& L\'{e}vy (see \cite{GellMann:1960}), and have provided many interesting links between field theory and differential geometry.  Good reviews of supersymmetric sigma models can be found in \cite{Deligne:1999,Freed:1999}. For an overview of supersymmetry and related topics, the reader my consult the encyclopedia edited by Duplij, Siegel \&  Bagger  \cite{Duplij:2004}. We also remark that two-dimensional supersymmetry algebras play a fundamental r\^ole in superstring theory. Importantly, under quite general assumptions, two-dimensional field theories are renormalisable, including those with highly non-linear interactions. We will not touch upon quantisation in this paper. \par 
Part of the motivation for this work was to extend the classical models that have $\Z_2^2$-supersymmetry as first defined by the author (in four dimensions) in \cite{Bruce:2019a}.  This gauntlet was picked up by Aizawa, Kuznetsova \& Toppan who showed that there is a plethora of mechanical models, both classical and quantum that do indeed exhibit this kind of supersymmetry (see \cite{Aizawa:2020c,Aizawa:2020d}).  The first quantum mechanical model, a direct generalisation of Witten's model \cite{Witten:1981}, was given by the author and Duplij (see \cite{Bruce:2020a}).  Given that the spin-statistics correspondence is not necessarily true in less than three spatial dimensions, it is plausible that this novel $\Z_2^2$-supersymmetry could be realised in experimental condensed matter physics (see for example \cite{MacKenzie:1988}). Thus, getting a handle on classical and quantum field theories in low dimensions that are $\Z_2^2$-supersymmetric is of potential importance in  physics. The models we construct here are $1+1$-dimensional, and we must mention that many  systems  have effectively  one spatial dimension, such as quantum wires, carbon nanotubes, and edge states in quantum Hall systems and topological insulators. Non-linear sigma models have long been applied to condensed matter physics. Supersymmetry has also been applied in condensed matter physics, notably starting with the pioneering  works of  Efetov on the statistical properties of energy levels and eigenfunctions in disordered systems (see \cite{Efetov:1983}).\par 
As the field theories we present are inherently two-dimensional, it is particularly convenient to use Dirac's light-cone coordinates (see \cite{Dirac:1949}), and this is reflected in our initial definitions.   Not only  does this choice of coordinates shorten the length of some expression as compared to writing them in inertial coordinates, but it also makes the transformation properties under two-dimensional Lorentz boosts of various expressions clear. The basic component fields in our models are  bosons, right-handed Majorana-Weyl spinors,  left-handed Majorana-Weyl spinors, and exotic bosons.  Typically, the right-handed and left-handed spinors belong to different sectors that cross commute while the exotic bosons anticommute with the spinors. The models will consist of dynamical bosons and fermions together with non-propagating exotic bosons, or  dynamical exotic bosons and fermions together with  non-propagating boson.  The former being more natural from our perspective of sigma models. We refer to models with propagating exotic bosons as ``exotic models''.  The reader should, of course, be reminded of Green--Volkov parastatistics (see \cite{Green:1953,Volkov:1959}). The r\^{o}le of $\Z_2^n$-gradings in parastatistics and parasupersymmetry has long been recognised (see, for example, \cite{Tolstoy:2014,Yang:2001,Yang:2001a}). However, as shown in \cite{Bruce:2020a}, the $\Z_2^2$-Lie algebras we study are not the same as those found in parasupersymmetry.  \par 

\medskip

\noindent \textbf{Arrangement.} For the remainder of this section, we recall the notion of a $\Z_2^2$-Lie algebra and remind the reader of the basics of $\Z_2^2$-geometry as needed  for the main part of the paper.  In Section \ref{Sec:Z22Mink} we define the $d=2$ $\mathcal{N}= (1,1)$ $\Z_2^2$-supertranslation  algebra (Definition \ref{Def:Z22SuperAlg}) and then present a representation of this algebra on a $\Z_2^2$-manifold (see Definition \ref{Def:Z22Mink}). That is, we construct the $\Z_2^2$-supercharges, etc. as vector fields on a  $\Z_2^2$-graded version of two-dimensional super-Minkowski spacetime.  We build some aspects of ``$\Z_2^2$-space methods'' that we then apply to construct sigma models that are $\Z_2^2$-supersymmetric (see Definition \ref{Def:SigmaModel} and Definition \ref{Def:NonLinSigmaModel}) in Section \ref{Sec:SigmaModels}.   We end the main text with a few closing remarks in Section \ref{Sec:ConRem}.  Two appendices are included, Appendix \ref{App:Ber} recalls the definition of the $\Z_2^2$-Berezinian and Appendix \ref{App:BerInt} on Berezin integration on $\Z_2^n$-space with two degree zero coordinates and one coordinate each of the non-zero degrees. The general theory of integration on $\Z_2^n$-domains, or even just $\Z_2^2$-domains with more that one of each non-zero degree coordinate is, at the time of writing this paper, work in progress.  Appendix \ref{App:BerInt} is included to prove that the integration method used in this paper is mathematically sound.

\subsection{$\Z_2^2$-Lie algebras}
Here we  recall the notion of a $\Z_2^2$-Lie algebra (see \cite{Rittenberg:1978,Scheunert:1979}).   A $\Z_2^2$-graded vector space is a vector space (over $\mathbb{R}$ or $\mathbb{C}$) that is the direct sum of homogeneous vector spaces
$$\mathfrak{g} = \mathfrak{g}_{00} \oplus \mathfrak{g}_{11}   \oplus \mathfrak{g}_{01} \oplus \mathfrak{g}_{10}\,.$$
Note that we have fixed an ordering for the elements of $\Z_2^2$ and that other orderings do appear in the literature.  We will denote the $\Z_2^2$-degree of  a  homogeneous element $ a \in \mathfrak{g}$ as $\mathrm{deg}(a) \in \mathbb{Z}_2^2$.  The scalar product on $\mathbb{Z}_2^2$ is inherited from the standard scalar product on  $\R^2$ and  we denote this by $\langle -, - \rangle$. That is, if $\mathrm{deg}(a) = (\gamma_1, \gamma_2)$ and $\mathrm{deg}(b) = (\gamma^\prime_1, \gamma^\prime_2)$, then $\langle \mathrm{deg}(a), \mathrm{deg}(b) \rangle = \gamma_1 \gamma^\prime_1 +  \gamma_2 \gamma^\prime_2$. A $\Z_2^2$-graded vector space had a decomposition into its even and odd subspaces, defined by the total degree,
\begin{align*}
&\mathfrak{g}_{ev} := \mathfrak{g}_{00} \oplus \mathfrak{g}_{11}, && \mathfrak{g}_{od} := \mathfrak{g}_{01} \oplus \mathfrak{g}_{10}.
\end{align*}
\begin{definition}
A \emph{$\Z_2^2$-Lie algebra} is a  $\Z_2^2$-graded vector space equipped with a bi-linear operation, $[-,-]$, such that for homogeneous elements  $a,b$ and $c\in \mathfrak{g}$, the following are satisfied:\\
\begin{enumerate}
\itemsep1em 
\item $\mathrm{deg}([a,b]) = \mathrm{deg}(a) + \mathrm{deg}(b)$,
\item $[a,b] = {-} (-1)^{\langle \mathrm{deg}(a), \mathrm{deg}(b) \rangle} [b,a]$,
\item $[a,[b,c] ] = [[a,b],c] = + (-1)^{\langle \mathrm{deg}(a), \mathrm{deg}(b) \rangle} [b, [a,c]]$.
\end{enumerate}
\end{definition}
Extension to inhomogeneous elements is via linearity.  We have written the graded Jacobi identity for a $\mathbb{Z}_2^2$-Lie algebra in Loday-Leibniz form, though due to the graded antisymmetry of the Lie bracket one can recast this in a more traditional form. Moreover, generalising this definition to $\Z_2^n$-Lie algebras is straightforward.

\subsection{Elements of $\Z_2^n$-geometry}
The locally ringed space approach  to $\Z_2^n$-manifolds  was pioneered by Covolo, Grabowski and Poncin (see \cite{Covolo:2016}) We restrict  attention to real $\Z_2^n$-manifolds and will not consider the complex analogues.\par 
\begin{definition}[\cite{Covolo:2016}]
A \emph{locally} $\Z_{2}^{n}$-\emph{ringed space}, $n \in \mathbb{N} \setminus \{0\}$, is a pair $S := (|S|, \mathcal{O}_{S} )$ where $|S|$ is a second-countable Hausdorff topological space and a $\cO_{S}$  is a sheaf  of $\Z_{2}^{n}$-graded, $\Z_{2}^{n}$-commutative associative unital $\mathbb{R}$-algebras, such that the stalks $\cO_{S,p}$, $p \in  |S|$ are local rings.
\end{definition}
Here,  $\Z_{2}^{n}$-commutative means that any two sections $a$, $b \in \mathcal{O}_S(|U|)$, $|U| \subset |S|$ open, of homogeneous degree $\deg(a)$ and $\deg(b) \in \Z_{2}^{n}$, respectively, commute with a Koszul sign rule defined by the standard scalar product
$$ab = (-1)^{\langle \deg(a), \deg(b)\rangle} \: ba.$$
We need to fix a convention on  the order of elements in $\Z_{2}^{n}$, we do this  filling in zeros from the left and ones from the right and then putting the elements with zero total degree at the front while keeping their relative order.  For example, and pertinent for this paper, we order the elements of $\Z_2^2$ as
$$
 \Z_{2}^{2} := \Z_2 \times \Z_2 =   \big \{ (0,0),  \: (1,1), \: (0,1), \: (1,0) \big\},$$
which, of course, agrees with how we ordered the homogeneous subspaces of a $\Z_2^2$-graded vector space. A tuple $\mathbf{q} = (q_{1}, q_{2}, \cdots , q_{N})$, where  $N = 2^{n}-1$ provides all the information about the non-zero degree coordinates, which we collectively write as $\zx$. 
\begin{definition}[\cite{Covolo:2016}]
A (smooth) $\Z_{2}^{n}$-\emph{manifold} of dimension $p |\mathbf{q}$ is a locally $\Z_{2}^{n}$-ringed space $ M := \left(|M|, \mathcal{O}_{M} \right)$, which is locally isomorphic to the $\Z_{2}^{n}$-ringed space $\mathbb{R}^{p |\mathbf{q}} := \left( \mathbb{R}^{p}, C^{\infty}_{\R^p}[[\zx]] \right)$. Here   $C^{\infty}_{\R^p}$ is the structure sheaf on the Euclidean space $\R^p$.  Local sections of $\mathbb{R}^{p |\mathbf{q}}$ are formal power series in the $\Z_{2}^{n}$-graded variables $\zx$ with  smooth coefficients, i.e.,
$$   C^{\infty}(\mathcal{U}^p)[[\zx]] =  \left \{ \sum_{\alpha \in \mathbb{N}^{N}} \zx^{\alpha}f_{\alpha} ~ | \: f_{\alpha} \in C^{\infty}(\mathcal{U}^p)\right \},$$
for any  $\mathcal{U}^p \subset \R^p$ open.  \emph{Morphisms} between $\Z_{2}^{n}$-manifolds are  morphisms of $\Z_{2}^{n}$-ringed spaces, that is,  pairs $(\phi, \phi^{*}) : (|M|, \mathcal{O}_{M}) \rightarrow  (|N|, \mathcal{O}_{N})$ consisting of a continuous map  $\phi: |M| \rightarrow |N|$ and sheaf morphism $\phi^{*}: \mathcal{O}_{N}(|V|) \rightarrow \mathcal{O}_{M}(\phi^{-1}(|V|))$, where $|V| \subset |N|$  is open.
\end{definition}
Many of the standard results from the theory of supermanifolds generalise to $\Z_{2}^{n}$-manifolds (see \cite{Manin;1997} for an introduction to the theory of supermanifolds). For example, the topological space $|M|$ comes with the structure of a smooth manifold of dimension $p$.  Let $M$ be a $\Z_{2}^{n}$-manifold, then an \emph{open $\Z_{2}^{n}$-submanifold} of $M$ is a  $\Z_{2}^{n}$-manifold defined as 
$$U := \big( |U|, \cO_{M,|U|} \big),$$
where $|U| \subseteq |M|$ is an open subset. From the definition of a $\Z_2^n$-manifold, we know that for `small enough' $|U|$ we have an isomorphism
$$ (\varphi, \varphi^*): U  \stackrel{\sim}{\rightarrow} \mathcal{U}^{p|\mathbf{q}}\,.$$
This local diffeomorphism allows the construction of  a local coordinate system.  We write the local coordinates  as $x^{A} = (x^\mu, \zx^i)$.  The commutation rules for these coordinates are given
$$x^{A}x^{B} = (-1)^{\langle \deg(A), \deg(B)\rangle}\:x^{B}x^{A}\,.$$
Changes of coordinates, i.e., different choices of the local isomorphisms,  can be written (using standard abuses of notation) as $x^{A'} = x^{A'}(x)$, where we understand the changes of coordinates to respect the $\Z_{2}^{n}$-grading. Note that generically we have a formal power series rather than just polynomials. We will refer to global sections of the structure sheaf of a $\Z_{2}^{n}$-manifold as \emph{functions} and employ the standard notation  $ C^{\infty}(M):=\cO_M(|M|)$. 
\begin{example}[$\Z_2^n $-graded  Cartesian spaces]\label{Exp:Z2nCartSp} Directly from the definition,   $\mathbb{R}^{p |\mathbf{q}} := \left( \mathbb{R}^{p}, C^{\infty}(\mathbb{R}^{p})[[\zx]] \right)$ is a $\Z_2^n $-manifold. Global coordinates $(x^\mu, \zx^i)$ can be employed, where the coordinate map is just the identity. In this paper, we will only meet $\Z_2^2$-manifolds that are globally isomorphic to $\R^{2|1,1,1}$.
\end{example} 
We have a \emph{chart theorem} (\cite[Theorem 7.10]{Covolo:2016}) that allows us to uniquely extend morphisms between the local coordinate domains to morphisms of locally $\Z_{2}^{n}$-ringed spaces. That is, we can describe morphisms using local coordinates. Naturally,  the  $\Z_2^n$-degree of a coordinate must be preserved under the pull-back by a morphism. \par
A vector field on a $\Z_{2}^{n}$-manifold is  a $\Z_{2}^{n}$-derivation of the global functions. Thus, a (homogeneous) vector field is  a linear map $X: C^{\infty}(M) \rightarrow C^{\infty}(M)$, that satisfies the $\Z_2^n$-graded Leibniz rule
$$X(fg) = X(f) g + (-1)^{\langle \deg(X), \deg(f) \rangle}\: f X(g),$$
for any  homogeneous $f$ and not necessarily homogeneous $g \in C^{\infty}(M)$. It is easy to check that the space of vector fields  has a (left) $C^{\infty}(M)$-module structure. We denote the module of vector fields as $\Vect(M)$. An arbitrary  vector field can be `localised' (see \cite[Lemma 2.2]{Covolo:2016b}) in the sense that given $|U| \subset |M|$ there always exists a unique derivation
$$X|_{|U|} : \cO_M(|U|) \rightarrow \cO_M(|U|),$$
such that $X(f)|_{|U|} = X|_{|U|}(f_{|U|})$. Because of this local property, it is clear that one has a sheaf of $\cO_M$-modules formed by the local derivations, this defines the \emph{tangent sheaf} of a $\Z_{2}^{n}$-manifold (see \cite[Definition 5.]{Covolo:2016b}). Moreover, this sheaf is locally free. Thus, we can always write a vector field locally as 
$$X = X^{A}(x)\frac{\partial}{\partial x^{A}},$$
where the partial derivatives are defined as standard for the coordinates of degree zero and are defined algebraically for the remaining coordinates. We drop the explicit reference to the required restriction as is standard in differential geometry. The order of taking partial derivatives matters, but only up to sign factors,
$$\frac{\partial}{\partial x^{A}} \frac{\partial}{\partial x^{B}} =  (-1)^{\langle \deg(A), \deg(B)  \rangle } \: \frac{\partial}{\partial x^{B}} \frac{\partial}{\partial x^{A}}.$$
Under the ($\Z_2^n$-graded) commutator 
$$[X,Y] :=  X \circ Y - (-1)^{\langle \deg(X), \deg(Y)\rangle } \:  Y\circ X,$$
$\Vect(M)$ becomes a $\Z_{2}^{n}$-Lie algebra (see \cite{Covolo:2012,Rittenberg:1978,Scheunert:1979}). The  grading and symmetry  of the Lie bracket are clear, and one can directly check that the Jacobi identity  
$$[X,[Y,Z]] = [[X,Y],Z] + (-1)^{\langle \deg(X), \deg(Y) \rangle } \: [Y, [X,Z]],$$
holds. \par 
There are now many other technical results known about $\Z_2^n$-manifolds, most of which we will not require in the rest of this paper. The interested reader can consult \cite{Bruce:2019b,Bruce:2019c,Bruce:2020}.

\section{Two dimensional $\mathcal{N} =(1|1)$ $\Z_2^2$-Minkowski spacetime}\label{Sec:Z22Mink}
\subsection{The $\Z_2^2$-graded  supertranslation algebra}
Following our earlier work \cite{Bruce:2019a}, we propose the following $\Z_2^2$-Lie algebra as the starting place for our constructions.
\begin{definition}\label{Def:Z22SuperAlg}
The \emph{$\Z_2 \times \Z_2$-graded, $d=2$, $\mathcal{N}= (1,1)$ supertranslation algebra} is the $\Z_2^2$-Lie algebra with $5$ generators with the following assigned degrees
$$\{\underbrace{P_-}_{(0,0)}, ~ \underbrace{P_+}_{(0,0)}, ~ \underbrace{Z_{-+}}_{(1,1)},~\underbrace{Q_-}_{(0,1)} ,~ \underbrace{Q_+}_{(1,0)} \} ,$$
where they are assumed to transform under $2$-d Lorentz boosts (here $\beta$ is the rapidity) as
\begin{align*}
& P_-   \mapsto \rme^\beta \, P_-, 
& P_+   \mapsto \rme^{-\beta} \, P_+,
\\
& Z_{-+} \mapsto Z_{-+}. &\\
& Q_- \mapsto \rme^{\half \beta}\, Q_-,
& Q_+ \mapsto \rme^{-\half \beta}\, Q_+,\\
\end{align*}  
with the  non-vanishing $\Z_2 \times \Z_2$-graded Lie brackets
\begin{align*}
& [Q_-, Q_-] = P_-, && [Q_+, Q_+] = P_+, & [Q_-, Q_+] = Z_{-+}.
\end{align*}
\end{definition}

This definition has been chosen  to exploit light-cone coordinates and to easily identify  the action of Lorentz boosts on various expressions later in this paper.  In all, we see that the generators consist of a pair of right-handed and left-handed vectors, a  pair of right-handed and left-handed  Majorana--Weyl spinors and a Lorentz scalar.  It is a direct exercise to check that the appropriately graded Jacobi identity holds. For brevity, we will refer to the \emph{$\Z_2^2$-supertranslation algebra}.

\subsection{A $\Z_2^2$-graded Cartesian space realisation}
One could   use the  Baker--Campbell--Hausdorff formula to formally `integrate' the $\Z_2^2$-Lie algebra structure to obtain a $\Z_2^2$-Lie group, i.e., a group object in the category of $\Z_2^2$-manifolds (for group objects in a category see \cite{MacLane:1998}).  To do this carefully, just as in the standard supercase,  one requires the use of the functor of points or $\Lambda$-points (see \cite{Bruce:2020}). At the time of writing this paper,  $\Z_2^2$-Lie groups remain almost completely unexplored.  However, here we can make an educated postulate of the required $\Z_2^2$-space (see Example \ref{Exp:Z2nCartSp}). 
\begin{definition}\label{Def:Z22Mink}
The $\Z_2^2$-manifold that comes equipped with global coordinate systems of the form
 $$\{\underbrace{x^-}_{(0,0)}, ~ \underbrace{x^+}_{(0,0)},  ~ \underbrace{z}_{(1,1)},~\underbrace{\theta_-}_{(0,1)} ,~ \underbrace{\theta_+}_{(1,0)}  \},$$
will be referred to as \emph{$d=2$, $\mathcal{N} = (1,1)$ $\Z_2^2$-Minkowski spacetime} and shall be denoted as $\cM_2^{(1,1)}$. 
\end{definition}
Clearly, from the definition $\cM_2^{(1,1)} \cong \R^{2|1,1,1} = \big( \R^2, \, C^\infty_{\R^2}[[z, \theta_-, \theta_+]] \big)$.  Under Lorentz boosts these coordinate transform as 
\begin{align}\label{Eqn:Lorentz}\nonumber
& x^-   \mapsto \rme^{-\beta} \, x^-, 
& x^+   \mapsto \rme^{\beta} \, x^+,\\ \nonumber 
& z \mapsto z, &\\
& \theta_- \mapsto \rme^{-\half \beta}\, \theta_-,
& \theta_+ \mapsto \rme^{\half \beta}\, \theta_+,
\end{align}  
again, $\beta$ is the rapidity. Thus, the underlying reduced manifold is two-dimensional Minkowski spacetime, here explicitly given in light-cone coordinates.  For brevity, we will refer to \emph{$\Z_2^2$-Minkowski spacetime}.\par  
The following vector fields form a representation of the $\Z_2^2$-supertranslation algebra (see Definition \ref{Def:Z22SuperAlg})
\begin{align}\nonumber
& P_- = \frac{\partial}{\partial x^-}, && P_+ = \frac{\partial}{\partial x^+},\\
\nonumber 
& Z_{-+} = \frac{\partial}{\partial z}.&&\\
& Q_- = \frac{\partial }{\partial \theta_-} + \frac{1}{2} \theta_- \frac{\partial}{\partial x^-}  - \frac{1}{2} \theta_+ \frac{\partial}{\partial z},
&&Q_+ = \frac{\partial }{\partial \theta_+}  + \frac{1}{2} \theta_+ \frac{\partial}{\partial x^+} + \frac{1}{2} \theta_- \frac{\partial}{\partial z}.
\end{align}
The reader can easily verify that these vector fields have the right transformation properties under a Lorentz boost. \par
 The $\Z_2^2$-supertranslations acting on $\cM_2^{(1,1)}$ are parametrised by
 $$\{\underbrace{\lambda^-}_{(0,0)}, ~ \underbrace{\lambda^+}_{(0,0)}, ~  \underbrace{\mu}_{(1,1)}, ~ \underbrace{\epsilon_-}_{(0,1)}, ~ \underbrace{\epsilon_+}_{(1,0)}  \},$$
 and are explicitly given by
 \begin{align}\label{Eqn:SupTrans}\nonumber
 & x^- \mapsto x^-  + \lambda^- + \frac{1}{2}\, \epsilon_- \theta_- ,
 && x^+ \mapsto x^+ + \lambda^+ + \frac{1}{2}\,  \epsilon_+ \theta_+ ,\\ \nonumber
 & z \mapsto z' = z + \mu + \frac{1}{2}\big(\epsilon_+ \theta_-   -  \epsilon_- \theta_+\big),&&  \\ 
 & \theta_- \mapsto   \theta_- + \epsilon_-,
 && \theta_+ \mapsto   \theta_+ + \epsilon_+.
 \end{align}
\begin{remark}
The $\Z_2^2$-manifold constructed here should be compared with Tolstoy's $\Z_2^2$-graded spacetimes, see \cite{Tolstoy:2014a}. The key difference is that  our coordinates on the underlying Minkowski spacetime are of degree $(0,0)$ as where Tolstoy considers coordinates of degree $(1,1)$.
\end{remark}

\subsection{Integration on $\cM_2^{(1,1)}$ and the invariance of the coordinate volume}\label{SubSec:Int} Integration on general $\Z_2^n$-manifolds is still very much work in progress. However, the situation for low dimensional $\Z_2^2$-domains is understood (see \cite{Poncin:2016}). We will present the details of the soundness of the definition of the Berezin integral in Appendix \ref{App:BerInt}. Here we give the definition and prove that the coordinate Berezin volume is well-behaved with respect  to Lorentz boost and $\Z_2^2$-supertranslations. Two-dimensional Minkowski spacetime is orientable and we will assume without further reference, that an orientation has been chosen.  \par 
 An \emph{integrable Berezin section} on $\cM_2^{(1,1)}$ is a compactly supported Berezin section (not necessarily homogeneous)
$${\Bsigma}(x^-,x^+,z,\theta_-, \theta_+) = \mathrm{D}[x^-,x^+,z,\theta_-, \theta_+, ] \, \sigma(x^-,x^+,z,\theta_-, \theta_+),$$
such that there is no term of the form $z \, \sigma_z(x^-,x^+)$ in the function\footnote{or more generally the $\Z_2^2$-superfield, see Subsection \ref{SubSec:SupFie}.} $\sigma(x^-,x^+,z,\theta_-, \theta_+)$. Here $ D[x^-,x^+,z,\theta_-, \theta_+]$ is the  \emph{coordinate Berezin volume} and transforms under general coordinate changes as
$$ \mathrm{D}^\prime[x'^-,x'^+,z',\theta'_-, \theta'_+]= \mathrm{D}[x^-,x^+,z',\theta_-, \theta_+] \, \Z_2^2 \Ber\left( \frac{\partial(x'^-,x'^+, z', \theta'_-, \theta'_+)}{\partial(x^-,x^+, z,\theta_-, \theta_+)}\right),$$
where $\Z_2^2 \Ber$ is the $\Z_2^2$-graded generalisation of the Berezinian (see \cite{Covolo:2015,Covolo:2012,Poncin:2016} for further details). There is some slight abuse of notation here, as in general, we need not assume that the `primed' coordinates on the reduced manifold are light-cone coordinates. We will, by convention, take the coordinate Berezin volume to carry $\Z_2^2$-degree $(0,0)$. Other conventions are possible. \par 
The \emph{Berezin integral} of $\Bsigma$ is then defined as 
\begin{align}\label{Eqn:BerInt}\nonumber
\int_{\cM_2^{(1,1)}} \Bsigma &:= \int_{\cM_2^{(1,1)}}\mathrm{D}[x^-,x^+, z, \theta_-, \theta_+] \, \big( \sigma(x^-,x^+) + \theta_- \, \sigma_+(x^-,x^+) + \theta_+ \, \sigma_-(x^-,x^+) \\ \nonumber 
 &+ \theta_- \theta_+ \, \sigma_{+-}(x^-,x^+) + \cO(z)\big) \\
& = \int_{\R^2}\rmd x^+\, \rmd x^- \, \sigma_{+-}(x^-,x^+).
\end{align}
The final integral over $\R^2$ is well-defined as we have taken all the components of an integrable Berezin density to be compactly supported. In particular, $\sigma_{+-}$ is compactly supported. Note that to be consistent, we take the symbol $\int_{\cM_2^{(1,1)}}$ to carry $\Z_2^2$-degree $(1,1)$. \par 
\smallskip 

As we will not consider general coordinate transformations in this paper, we will not need to be more explicit here with the $\Z_2^2$-Berezinian and we simply point the interested reader to the literature and Appendix \ref{App:Ber}. However, we will show that the coordinate Berezin volume  is invariant under (infinitesimal)  $\Z_2^2$-supertranslations. To do this we will use the generalisation of Liouville's formula. 
\begin{proposition}\label{Prop:Invariance}
The coordinate  Berezin volume on $\cM_2^{(1,1)}$ is invariant under   $\Z_2^2$-supertranslations and Lorentz boosts. 
\end{proposition}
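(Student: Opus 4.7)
The plan is to treat the Lorentz boost and $\Z_2^2$-supertranslation cases separately, using the transformation rule for the coordinate Berezin volume in each case. Since Lorentz boosts act as a (global) linear rescaling while supertranslations are generated infinitesimally by vector fields with constant or at worst linear coefficients, both reduce to fairly short computations.

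For the Lorentz boost, I would read off from \eqref{Eqn:Lorentz} that the Jacobian
$\partial(x'^-,x'^+,z',\theta'_-,\theta'_+)/\partial(x^-,x^+,z,\theta_-,\theta_+)$ is diagonal with entries $\rme^{-\zb}, \rme^{\zb}, 1, \rme^{-\zb/2}, \rme^{\zb/2}$. The plan is to plug this diagonal matrix into the definition of the $\Z_2^2$-Berezinian (recalled in Appendix \ref{App:Ber}): the two degree-$(0,0)$ entries contribute multiplicatively in the numerator, the single degree-$(1,1)$ entry contributes multiplicatively as well (and equals $1$), and the two odd-sector entries (degrees $(0,1)$ and $(1,0)$) contribute in the denominator. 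One then obtains
\[
\Z_2^2\Ber = \frac{\rme^{-\zb}\cdot \rme^{\zb}\cdot 1}{\rme^{-\zb/2}\cdot \rme^{\zb/2}} = 1,
\]
which gives $\mathrm{D}'[x'^-,x'^+,z',\theta'_-,\theta'_+] = \mathrm{D}[x^-,x^+,z,\theta_-,\theta_+]$, as required.

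For the $\Z_2^2$-supertranslations, I would use the (promised) $\Z_2^n$-graded Liouville formula: the coordinate Berezin volume is preserved under the flow of a vector field $X = X^{A}\partial_{A}$ if and only if the graded divergence
\[
\mathrm{div}(X) \;=\; (-1)^{\langle \deg(X^A),\deg(A)\rangle}\,\partial_{A}X^{A}
\]
vanishes. It therefore suffices to check this for each of the five generators $P_-, P_+, Z_{-+}, Q_-, Q_+$ in the representation written just after Definition \ref{Def:Z22Mink}. The three vectors $P_\pm, Z_{-+}$ are constant coefficient, so their divergence is manifestly zero. For $Q_-$, each of the coefficients $1$, $\tfrac{1}{2}\theta_-$, $-\tfrac{1}{2}\theta_+$ is annihilated by the corresponding partial derivative $\partial_{\theta_-}$, $\partial_{x^-}$, $\partial_z$ (since $\theta_\pm$ are independent of $x^-$ and $z$), so $\mathrm{div}(Q_-) = 0$, and the computation for $Q_+$ is identical. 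Since the divergence is linear in $X$ and the generators span the $\Z_2^2$-supertranslation algebra, invariance follows for an arbitrary infinitesimal $\Z_2^2$-supertranslation, and then globally by integrating the flow (which is polynomial, hence exists).

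The one thing to be careful about is that the "Liouville formula" one is invoking genuinely holds with the sign convention above in the $\Z_2^2$-graded setting, and that the convention on which block of the Jacobian appears in the numerator versus denominator of the $\Z_2^2$-Berezinian is the one recalled in Appendix \ref{App:Ber}; this is what I expect to be the only delicate point, since once those conventions are pinned down the actual calculations are immediate. I would therefore preface the proof by a short sentence referring the reader to Appendix \ref{App:Ber} for the conventions and to \cite{Poncin:2016} for the graded Liouville statement, and then present the two short computations above.
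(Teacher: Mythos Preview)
Your proposal is correct and, for the Lorentz boost, identical to the paper's argument. For the $\Z_2^2$-supertranslations the paper also invokes the Liouville-type relation between the $\Z_2^2$-Berezinian and the $\Z_2^2$-trace, but applies it in a slightly different packaging: rather than checking $\mathrm{div}(Q_\pm)=0$ generator by generator, it writes down the Jacobian $J_S$ of the \emph{finite} supertranslation \eqref{Eqn:SupTrans} directly, observes that $J_S=\Id+A_-+A_+$ with $A_\pm$ strictly off-diagonal (hence $\Z_2^2\tr(A_\pm)=0$), and concludes $\Z_2^2\Ber(J_S)=1$ using the Berezinian--trace relation from \cite[Section 6]{Covolo:2012}. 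Your divergence computation is the infinitesimal version of exactly the same calculation; the paper's route has the minor advantage of avoiding the ``integrate the flow'' step since the finite transformation is already at hand, while yours is a touch more conceptual and would generalise more cleanly to other generators. Either way the content is the same.
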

\begin{proof}
To show that $\mathrm{D}[x^-, x^+,z, \theta_-,\theta_+]$ is invariant we explicitly evaluate the $\Z_2^2$-Berezinian of the required Jacobi matrix
and show that it is equal to $1$. First, for the $\Z_2^2$-supertranslations \eqref{Eqn:SupTrans}
\smallskip 

\begin{align*}
J_{S} & = \renewcommand\arraystretch{1.5}  \begin{pmatrix}
    1 & 0  & 0 & -\half \epsilon_- &0   \\
    0 & 1  &  0 & 0& -\half \epsilon_+  \\
    0 & 0  &  1 & \half \epsilon_+ & -\half \epsilon_- \\
    0 & 0  &  0 & 1 & 0 \\
    0 & 0  &  0& 0 & 1 \\
\end{pmatrix}\\
\\
& = \renewcommand\arraystretch{1.5}  \begin{pmatrix}
    1 & 0  &  0 &0 & 0  \\
    0 & 1  & 0 &0 & 0 \\
    0 & 0  &  1 & 0 & 0 \\
    0 & 0  &  0 & 1 & 0 \\
    0 & 0  &  0 & 0 & 1 \\
\end{pmatrix} + \renewcommand\arraystretch{1.5}  \begin{pmatrix}
    0 & 0  & 0 & -\half \epsilon_- &0   \\
    0 & 0  &  0 & 0&0  \\
    0 & 0  &  0 & 0 & -\half \epsilon_- \\
    0 & 0  &  0 & 0 & 0 \\
    0 & 0  &  0& 0 & 0 \\
\end{pmatrix}
+ \renewcommand\arraystretch{1.5}  \begin{pmatrix}
   0 & 0  & 0 & 0 &0   \\
    0 & 0  &  0 & 0& -\half \epsilon_+  \\
    0 & 0  &  0 & \half \epsilon_+ & 0 \\
    0 & 0  &  0 & 0 & 0 \\
    0 & 0  &  0& 0 & 0 \\
\end{pmatrix}.
\end{align*}
\smallskip

Thus, we write $J_S = \Id_{5 \times 5} + A_- + A_+$, noting that the matrices $A_-$ and $A_+$ are infinitesimal. Thus, we can use the relation between the $\Z_2^2$-Berezinian and the $\Z_2^2$-trace (see \cite[Section 6]{Covolo:2012}) to deduce that
$$\Z_2^2\Ber(J_S) = 1 + \Z_2^2\tr(A_-) + \Z_2^2\tr(A_-) = 1,$$
as the $\Z_2^2$-graded trace is essentially the sum of $\pm$ the diagonal entries (see \cite[Section 2]{Covolo:2012}). Thus, the coordinate Berezin volume is invariant under $\Z_2^2$-supertranslations. As for Lorentz boosts \eqref{Eqn:Lorentz}, the  Jacobi matrix is 
\smallskip 

$$J_{L}  = \renewcommand\arraystretch{1.5}  \begin{pmatrix}
    \rme^{-\beta} & 0  & 0&0 & 0  \\
    0 & \rme^\beta  &  0 &0 & 0 \\
    0 & 0  & 1 & 0 & 0 \\
    0 & 0  &  0 & \rme^{-\half \beta} & 0 \\
    0 & 0  &  0 & 0& \rme^{\half \beta} \\
\end{pmatrix}.$$
\smallskip 

\noindent As the $\Z_2^2$-Berezinian of a diagonal matrix is just the product of the diagonal entries we see that $\Z_2^2\Ber(J_L) = 1$ (this follows from the definition, see \cite{Covolo:2012}).  Thus, the coordinate Berezin volume is invariant under Lorentz boosts. 
\end{proof}

\subsection{The Covariant Derivatives}
In standard supersymmetry, the origin of the SUSY covariant derivatives is the fact that the partial derivatives for the odd coordinates of superspace are not invariant under supertranslations. The same is true in the current situation and it is easy enough to see that the required \emph{$\Z_2^2$-SUSY covariant derivatives} on $\cM_2^{(1,1)}$ are
\begin{align}\label{Eqn:SUSYCovDer}
& D_- = \frac{\partial }{\partial \theta_-} - \frac{1}{2} \theta_- \frac{\partial}{\partial x^-}  +  \frac{1}{2}\theta_+  \frac{\partial}{\partial z},
&& D_+ = \frac{\partial }{\partial \theta_+} - \frac{1}{2} \theta_+ \frac{\partial}{\partial x^+} - \frac{1}{2} \theta_- \frac{\partial}{\partial z}.
\end{align}
Direct calculation gives the expected results
\begin{align*}
& [D_-, D_-] = - P_- ,\\
& [D_+, D_+] = -P_+,\\
& [D_-, D_+] = -Z_{-+},
\end{align*}
and, as a covariant derivative, we have 
\begin{align*}
&[D_-, Q_-] = [D_-, Q_+] =0,\\
&[D_+, Q_-] = [D_+, Q_+] =0.
\end{align*}

\subsection{Scalar $\Z_2^2$-superfields}\label{SubSec:SupFie}
By scalar, we will mean a Lorentz scalar with respect to $2$-dimensional Lorentz boosts of the source. A \emph{$\Z_2^2$-superfield}  is a map
$$\Phi : \cM_2^{(1,1)} \longrightarrow M,$$
where $M$ is a smooth manifold, or more generally a $\Z_2^2$-manifold, or possibly some other ``space'' for which we can make sense of maps.  Note that we take all maps and not just those that preserve the $\Z_2^2$-degree. To make this precise, we need the internal homs, however, we will work formally as is customary in physics. The interested reader can consult Lledo \cite{Lledo:2017} for details of the standard supercase. That is, we will allow ourselves the luxury of allowing external parameters that carry non-trivial $\Z_2^2$-degree.   Let us fix some finite-dimensional smooth manifold $M$, and consider an open $U \subset M$ ``small enough'' so that we can employ local coordinates $y^a$. Then we write (with the standard abuses of notation)
$$\Phi^*y^a  = \Phi^a(x^-,x^+,z, \theta_-, \theta_+).$$
We will avoid global issues in this paper and just remark that any map $\Phi$ has such local representative and that a family of such local maps can be glued together to form a global map. \par
We will take the $\Z_2^2$-superfield to be a scalar with respect to the $2$-d Lorentz transformations. Moreover, because we have taken the target to be a manifold, the  $\Z_2^2$-superfields we consider here are $\Z_2^2$-degree zero.  Then to first order in $z$,
\begin{align}\label{Eqn:SupField}\nonumber
 \Phi^a(x^-,x^+ ,z,\theta_- , \theta_+) &= X^a(x^-,x^+) + \theta_- \, \psi^a_+(x^-,x^+) + \theta_+ \, \psi^a_-(x^-,x^+) + \theta_- \theta_+ \ F^a_{+-}(x^-,x^+)\\ 
&+ z G^a(x^-,x^+) + z \theta_- \,\chi^a_+(x^-,x^+) + z \theta_+ \,\chi^a_-(x^-,x^+)+  z \theta_- \theta_+\,  Y^a_{+-}(x^-,x^+) \, + \mathcal{O}(z^2).
\end{align}
In general, we have a formal power series in $z$, that is, we have an infinite number of component fields. However, for our later purposes, it is sufficient to truncate to expressions being at most linear in $z$.  Then  first four types of fields in a multiplet are
$$\big( \underbrace{X^a}_{(0,0)},~  \underbrace{F^d_{+-}}_{(1,1)},~ \underbrace{\psi^b_+}_{(0,1)}, ~  \underbrace{\psi^c_-}_{(1,0)}\big). $$
For the $\Z_2^2$-superfield to be a Lorentz scalar we require that, under Lorentz boosts
\begin{align*}
& \psi^b_+ \mapsto \rme^{\half \beta}\,\psi^b_+ , && \psi^c_- \mapsto \rme^{-\half \beta}\,\psi^c_-,
\end{align*} 
while $X^a$ and $F^b_{+-}$ are Lorentz scalars. That is we have  bosons, exotic boson that anticommute with the fermions while commuting with the bosons, and  left-handed and right-handed  fermions that mutually commute. More correctly, one should speak of $\Z_2^2$-commutativity. \par
The next four components are  (ordered via $\Z_2^2$-degree)
$$\big( \underbrace{Y^e_{+-}}_{(0,0)},~   \underbrace{G^h}_{(1,1)},~ \underbrace{\chi^f_-}_{(0,1)}, ~  \underbrace{\chi^g_+}_{(1,0)}\big). $$
Again we have bosons, exotic bosons, and mutually commuting left-handed and right-handed  fermions.
\begin{remark}
It is also possible to consider $\Z_2^2$-superfields that transform non-trivially under Lorentz boosts. For example, spinor-valued maps can be made sense of. Moreover, $\Z_2^2$-superfields that carry a non-zero $\Z_2^2$-degree are also possible and will be considered later.
\end{remark}
We define the action of  $\Z_2^2$-supertranslations on a $\Z_2^2$-superfield via the Lie derivative. Specifically, the $\Z_2^2$-supersymmetry transformations are defined as $\delta \Phi^a :=\big( \epsilon_- Q_- + \epsilon_+ Q_+  \big) \Phi^a $.  Then, to lowest order in the field, the component form of $\Z_2^2$-supersymmetry is given by 
\begin{align}\label{Eqn:CompZ22SUSY}\nonumber
& \delta X^a \approx  \epsilon_- \psi^a_+ + \epsilon_+ \psi^a_-, &
 & \delta F^b_{+-} \approx - \half \big ( \epsilon_- \partial_- \psi^b_- + \epsilon_+ \partial_+ \psi^b_+ \big), \\
& \delta \psi^c_+ \approx - \epsilon _- \half \partial_- X^c + \epsilon_+ F^c_{+-}, && \delta \psi^d_- \approx - \epsilon_ + \half \partial_+ X^d + \epsilon_- F^d_{+-},
\end{align}
where we have used the shorthand $\partial_- = \frac{\partial}{\partial x^-}$ and $\partial_+ = \frac{\partial}{\partial x^+}$.
\begin{lemma}\label{Lem:DSupField}
Let $\Phi^a$ be a degree zero $\Z_2^2$-superfield (see \eqref{Eqn:SupField}), then 
\begin{align*}
D_-\Phi^a & =\psi^a_+ + \theta_+  F^a_{+-} - \frac{1}{2}\theta_- \partial_- X^a - \frac{1}{2}\theta_- \theta_+  \partial_- \psi^a_- + \frac{1}{2} \theta_+ G^a + \frac{1}{2}\theta_- \theta_+ \chi^a_+ + \cO(z), \\
D_+\Phi^b & =  \psi^b_- + \theta_- F_{+-}^b - \frac{1}{2}\theta_+ \partial_+ X^b - \frac{1}{2}\theta_- \theta_+ \partial_+\psi^b_+  - \frac{1}{2} \theta_+ G^a - \frac{1}{2}\theta_- \theta_+ \chi^b_- + \cO(z).
\end{align*}
\end{lemma}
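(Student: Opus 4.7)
My plan is a direct computation. Using the explicit form of $D_\pm$ from \eqref{Eqn:SUSYCovDer} and the component expansion \eqref{Eqn:SupField} of $\Phi^a$, I would substitute and apply the $\Z_2^2$-graded Leibniz rule term by term. Since the statement only requires accuracy modulo $\cO(z)$, the working set of monomials to track is $\{1,\theta_-,\theta_+,\theta_-\theta_+\}$ with $x^\pm$-dependent coefficients, together with the $z$-linear block $\{z,z\theta_-,z\theta_+,z\theta_-\theta_+\}$ that feeds into those through $\partial_z$; everything else is absorbed into the $\cO(z)$ remainder.

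For $D_-\Phi^a$, I would handle the three pieces $\partial_{\theta_-}$, $-\tfrac{1}{2}\theta_-\partial_-$, and $\tfrac{1}{2}\theta_+\partial_z$ separately. The first piece uses only that $\theta_-$ and $\theta_+$ commute (since $\langle(0,1),(1,0)\rangle=0$) and yields $\psi^a_+ + \theta_+ F^a_{+-}$. The second piece uses $\theta_-^2=0$ (which follows from $\langle(0,1),(0,1)\rangle=1$) and collapses to $-\tfrac{1}{2}\theta_-\partial_- X^a - \tfrac{1}{2}\theta_-\theta_+\partial_-\psi^a_-$. The third piece extracts the $z$-linear components through $\partial_z$ and then multiplies by $\tfrac{1}{2}\theta_+$, killing the $\chi^a_-$ term via $\theta_+^2=0$ and the $Y^a_{+-}$ term via $\theta_+\theta_-\theta_+=0$, leaving $\tfrac{1}{2}\theta_+ G^a + \tfrac{1}{2}\theta_-\theta_+\chi^a_+$ after one use of $\theta_+\theta_-=\theta_-\theta_+$. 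Summing the three pieces gives the first formula. The computation for $D_+\Phi^b$ is symmetric modulo $\theta_-\leftrightarrow\theta_+$ and the opposite sign of the $\partial_z$-term, which accounts for the minus signs in front of $G$ and $\chi_-$ in the stated expression.

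The only step where the grading is genuinely felt is the Koszul sign produced when $\partial_{\theta_\pm}$ is applied to an expression already containing $z$: since $\langle(0,1),(1,1)\rangle = \langle(1,0),(1,1)\rangle = 1$, the fermionic derivatives anticommute with $z$. These sign-sensitive terms do arise from the $z G^a$, $z\theta_\pm\chi^a_\mp$ and $z\theta_-\theta_+Y^a_{+-}$ contributions to $\Phi^a$, but each of them carries an explicit factor of $z$ and is therefore swept into the $\cO(z)$ remainder. So while this is the place where one must not blindly apply the ordinary super-Leibniz rule, it does not disturb the visible terms, and the rest of the argument is purely mechanical bookkeeping.
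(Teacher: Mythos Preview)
Your proposal is correct and is precisely the approach the paper takes: the paper's own proof is the single sentence ``This follows from direct computation using \eqref{Eqn:SupField} and \eqref{Eqn:SUSYCovDer},'' and your write-up is exactly that computation spelled out, with the three pieces of $D_\mp$ handled separately and the $\Z_2^2$-Koszul signs tracked where relevant. One incidental remark: carrying your $D_+$ computation through literally produces a $-\tfrac{1}{2}\theta_- G^b$ term rather than the $-\tfrac{1}{2}\theta_+ G^a$ printed in the statement, so the lemma as stated contains a minor typo that your method would detect.
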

\begin{proof}
This follows from direct computation using \eqref{Eqn:SupField} and \eqref{Eqn:SUSYCovDer}.\\
\end{proof}

\section{$\Z_2\times \Z_2$-graded sigma models}\label{Sec:SigmaModels}
\subsection{Sigma models with non-propagating exotic bosons}
We are now in a position to generalise  supersymmetric sigma models to our double-graded setting. The target space we will take as $n$-dimensional Minkowski spacetime $\mathcal{M}{ink}^n$.  A  $\Z_2^2$-superfield is then a map $\Phi :  \cM_2^{(1,1)} \rightarrow \mathcal{M}{ink}^n$, and using coordinates $y^a$ on the target we  write $\Phi^*y^a := \Phi^a(x^-,x^+, z,\theta_-, \theta_+)$.   For any model to be well-defined, the Lagrangian Berezin section $\cL[\Phi] := \mathrm{D}[x^-, x^+, z, \theta_-  \theta_+] \, L(\Phi)$ needs to be an integrable Berezin section (see Subsection \ref{SubSec:Int}).  This means that the $\Z_2^2$-superfields in questions must be compactly supported, that is each of their components is compactly supported. Furthermore, the Lagrangian Berezin section must not contain a term of the form $z\,L(x^-x^+)$. This condition is independent of the coordinates chosen (see Appendix \ref{App:BerInt}). In light of Proposition \ref{Prop:Invariance} and the form of the $\Z_2^2$-supertranslations, this condition is also $\Z_2^2$-supersymmetric.  Depending on the exact form of the Lagrangian Berezin section, being integrable will place constraints on the $\Z_2^2$-superfields involved.  \par 
The most general action involving a scalar $\Z_2^2$-superfield is
$$S[\Phi] := \int_{\cM_2^{(1,1)}} \mathrm{D}[x^-, x^+, z, \theta_-, \theta_+]\, \mathcal{K}(\Phi, D_-\Phi, D_+ \Phi),$$
where $\mathcal{K}(\Phi, D_-\Phi, D_+ \Phi)$ has to be a Lorentz scalar and of degree $(1,1)$.  We specialise to specific actions. The simplest Lagrangian Berezin section that is second-order in the covariant derivatives, with a degree $(1,1)$ Lagrangian is 
$$\mathrm{D}[x^-, x^+, z, \theta_-, \theta_+]\, D_-\Phi^a D_+\Phi^b \eta_{ba}.$$
Being integrable means that the $\Z_2^2$-superfield (see \eqref{Eqn:SupField}) cannot contain the terms $z \theta_ -\chi^a_+$ and $z \theta_ +\chi^a_-$. Later we wish to add interactions and so we will insist that the $\Z_2^2$-superfield does not contain the term $zG^a$ either.  For example, the quadratic potential  $\half \Phi^a \Phi^b \eta_{ba}$ would contain the term $z X^a G^b\eta_{ba}$ and so cannot be integrable. 
\begin{definition}
Let $M$ be a smooth manifold and let $\Phi: \cM_2^{(1,1)} \rightarrow M$  be a (scalar) $\Z_2^2$-superfield.  Then $\Phi$ is said to be \emph{$z$-constrained} if  and only if its representative $\Phi^a(x^-, x^+, z,\theta_- , \theta_+)$  (see \eqref{Eqn:SupField}) does not contain the terms $z \theta_-\chi^a_+(x^-, x^+)$, $z \theta_+\chi^a_-(x^-, x^+)$, and $z G^a(x^-, x^+)$.
\end{definition}
\begin{remark}
A small modification of  the proof of Proposition \ref{Prop:NoZTermSigma} shows that being $z$-constrained is independent of the choice of coordinates on the source. However, we will \emph{not} concern ourselves here with general coordinate transformations on the source.  
\end{remark}
\begin{definition}\label{Def:SigmaModel} Let $\mathcal{M}{ink}^n$ be  $n$-dimensional Minkowski spacetime, and let $\Phi \in \InHom_{\,c}(\cM_2^{(1,1)} , \mathcal{M}{ink}^n )$ be a $z$-constrained $\Z_2^2$-superfield with compact support.  Then the \emph{$\Z_2^2$-graded linear sigma model action} is
$$\mathrm{S}_0[\Phi] := \int_{\cM_2^{(1,1)}} \mathrm{D}[x^-,x^+,z, \theta_-,\theta_+] \, D_-\Phi^a D_+\Phi^b \eta_{ba}.$$
\end{definition}
\begin{remark}
For simplicity, we have not included a potential term and so exclude Landau--Ginsburg type models from our discussion for the moment.
\end{remark}
\begin{proposition}\label{Prop:WellDefLinSig}
The $\Z_2^2$-graded linear sigma model action is well-defined.
\end{proposition}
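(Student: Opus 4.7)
The plan is to verify that the Lagrangian Berezin section
\[
\cL_0[\Phi] := \mathrm{D}[x^-,x^+,z,\theta_-,\theta_+]\, D_-\Phi^a\, D_+\Phi^b\, \eta_{ba}
\]
is an integrable Berezin section in the sense of Subsection \ref{SubSec:Int}, i.e.\ its density function is compactly supported and contains no monomial of the form $z\, g(x^-,x^+)$. First I would dispatch the bookkeeping: the density $D_-\Phi^a D_+\Phi^b\eta_{ba}$ is of $\Z_2^2$-degree $(0,1)+(1,0)=(1,1)$, matching $\int_{\cM_2^{(1,1)}}$ so that the action lands in degree $(0,0)$; Lorentz invariance follows from the complementary boost weights $\rme^{\pm \half \beta}$ of $D_-$ and $D_+$ together with the target-invariance of $\eta_{ba}$. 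Compact support is automatic, since every component field appearing in \eqref{Eqn:SupField} is compactly supported and $D_\pm$ only involve algebraic operations and $x^\pm$-derivatives, which preserve this property.

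The substantive step is the exclusion of the bare $z$-monomial. By Lemma \ref{Lem:DSupField}, for a $z$-constrained $\Phi$ the terms $\half\theta_+ G^a$ and $\half\theta_-\theta_+\chi^a_+$ (and the analogous ones in $D_+\Phi^b$) drop out of the $z^0$ part of the covariant derivatives because $G^a = \chi^a_\pm = 0$. I would then argue that the only possible $\cO(z)$ contribution to $D_-\Phi^a|_{\theta_-=\theta_+=0}$ arises from applying $\partial_{\theta_-}$ to a monomial $z\theta_-(\cdot)$ of $\Phi^a$: the $\Z_2^2$-graded Leibniz rule, using $\langle(0,1),(1,1)\rangle = 1$, yields $\partial_{\theta_-}(z\theta_-\chi^a_+) = -z\chi^a_+$, which vanishes by the $z$-constraint. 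The symmetric computation using $\chi^a_- = 0$ kills the corresponding term in $D_+\Phi^b|_{\theta=0}$.

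To conclude, since $\theta_\pm^2 = 0$, any $\theta$-factor in one argument cannot be neutralised by a $\theta$ in the other, so a bare $z$-monomial in the product $D_-\Phi^a D_+\Phi^b\eta_{ba}$ must arise from pairing the $\theta$-independent pieces at orders $z^0$ and $z^1$. Writing $[F]_{\theta=0,\, z^k}$ for the coefficient of $z^k$ in $F|_{\theta_-=\theta_+=0}$, one obtains
\[
[D_-\Phi^a D_+\Phi^b \eta_{ba}]_{\theta=0,\,z^1} = [D_-\Phi^a]_{\theta=0,\,z^0}\,[D_+\Phi^b]_{\theta=0,\,z^1}\,\eta_{ba} + [D_-\Phi^a]_{\theta=0,\,z^1}\,[D_+\Phi^b]_{\theta=0,\,z^0}\,\eta_{ba},
\]
and each summand has a vanishing factor by the previous paragraph. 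Hence $\cL_0$ is integrable and $\mathrm{S}_0[\Phi]$ is well-defined.

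The main obstacle is the careful sign bookkeeping in the $\Z_2^2$-graded Leibniz rule when commuting $\partial_{\theta_-}$ across the even-total-grade coordinate $z$, together with verifying that higher-order monomials $z^k\theta_-(\cdot)$ with $k \geq 2$ in $\Phi^a$ only contribute $z^k$ corrections to $D_-\Phi^a|_{\theta=0}$ and are therefore harmless for the no-bare-$z$ condition.
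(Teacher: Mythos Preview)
Your proposal is correct and follows the same approach as the paper: the paper's proof is a one-line appeal to the preceding discussion, namely that the $z$-constraint on $\Phi$ ensures the Lagrangian Berezin section is integrable (no bare $z$-term) and compactly supported. You have simply unpacked that discussion in detail, carefully tracking the $\Z_2^2$-graded signs and the $\theta$-independent pieces of $D_\mp\Phi^a$ to verify explicitly that no $z\,g(x^-,x^+)$ term can arise in the product.
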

\begin{proof}
This is clear from the preceding discussion, principally that the Lagrangian Berezin section is integrable, i.e., there is no $z L(x^-, x^+)$ term in the integrand and all the fields involved have compact support.  \\
\end{proof}
\begin{theorem}\label{Thm:SigLinInv}
The $\Z_2^2$-graded linear sigma model action is invariant under $\Z_2^2$-supertranslations (see \eqref{Eqn:SupTrans}). 
\end{theorem}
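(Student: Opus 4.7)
\noindent The plan is to recast the variation of $\mathrm{S}_0$ as the integral of a Lie derivative, invoke the invariance of the coordinate Berezin volume (Proposition \ref{Prop:Invariance}), and conclude by a Stokes-type identity on $\cM_2^{(1,1)}$.

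First I would observe that the infinitesimal $\Z_2^2$-supertranslation acts on $\Phi^a$ by the even derivation $\delta := \epsilon_- Q_- + \epsilon_+ Q_+$, the $\Z_2^2$-degrees of the parameters $\epsilon_\pm$ precisely cancelling those of $Q_\pm$. Using the graded commutation relations $[D_\pm, Q_\pm] = [D_\pm, Q_\mp] = 0$ from Subsection~2.4, and tracking the signs incurred when $\epsilon_\pm$ is commuted past $D_\pm$, one shows that $\delta$ graded-commutes with the covariant derivatives on scalar superfields: $\delta(D_\pm \Phi^a) = D_\pm(\delta \Phi^a)$. The graded Leibniz rule then yields
\begin{equation*}
\delta L \;=\; \delta\bigl(D_-\Phi^a\, D_+\Phi^b\, \eta_{ba}\bigr) \;=\; (\epsilon_- Q_- + \epsilon_+ Q_+)\, L,
\end{equation*}
that is, the variation of the Lagrangian equals $\delta$ applied to $L$ viewed as a function on $\cM_2^{(1,1)}$.

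Next, Proposition~\ref{Prop:Invariance}, in its infinitesimal form, gives $\mathcal{L}_\delta \mathrm{D}[x^-,x^+,z,\theta_-,\theta_+] = 0$, so that
\begin{equation*}
\mathrm{D}[x^-,x^+,z,\theta_-,\theta_+]\, \delta L \;=\; \mathcal{L}_\delta\!\bigl(\mathrm{D}[x^-,x^+,z,\theta_-,\theta_+]\, L\bigr),
\end{equation*}
and the claim reduces to a Stokes-type identity: the integral over $\cM_2^{(1,1)}$ of the Lie derivative of a compactly supported integrable Berezin section vanishes. I would verify this in coordinates by expanding $\delta L$ using the explicit form of $Q_\pm$: the $\partial_{\theta_\pm}$-piece contributes nothing after Berezin integration (Berezin-derivative identity); the $\theta_\mp\partial_\pm$-piece contributes nothing by the ordinary divergence theorem on $\R^2$, using the compact support guaranteed by $\Phi \in \InHom_{\,c}(\cM_2^{(1,1)}, \mathcal{M}{ink}^n)$; and the $\theta_\pm \partial_z$-piece is treated separately in light of the integration convention \eqref{Eqn:BerInt}.

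The main obstacle is this last $\partial_z$-contribution, since the convention \eqref{Eqn:BerInt} does not treat $z$ as an ordinary Berezin variable but instead extracts the $\theta_-\theta_+$-coefficient at $z=0$. Here the $z$-constraint on $\Phi$ is crucial: I would use Lemma~\ref{Lem:DSupField} specialised to $G^a = \chi^a_\pm = 0$ to expand $L = L^{(0)} + z L^{(1)} + \mathcal{O}(z^2)$, isolate the $\theta_-$- and $\theta_+$-coefficients of $L^{(1)}$ (these are precisely the pieces that survive in $\bigl[\theta_\pm\partial_z L\bigr]^{(0)}_{+-}$), and verify that when combined across $Q_-$ and $Q_+$ they assemble into total $x^\pm$-divergences, which integrate to zero by compact support. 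This is the step that exploits the specific structure forced on $D_\pm \Phi$ by $z$-constrainedness, and where the detailed interplay between the auxiliary component $Y^a_{+-}$ and the fermionic sector must be tracked carefully.
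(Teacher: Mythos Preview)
Your approach is correct in spirit but takes a considerably longer route than the paper's. The paper's proof is essentially two lines: since the Lagrangian is built solely from $D_-\Phi^a$ and $D_+\Phi^b$, and since the $D_\pm$ are by construction invariant under the $\Z_2^2$-supertranslations (they commute with the $Q_\pm$), the Lagrangian transforms covariantly; together with the invariance of the coordinate Berezin volume (Proposition~\ref{Prop:Invariance}), a finite change-of-variables argument gives invariance of the action immediately. No infinitesimal Stokes identity, no dissection of $Q_\pm$ into pieces, and in particular no separate treatment of the $\partial_z$ term is needed.

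Your infinitesimal route via $\delta L = (\epsilon_- Q_- + \epsilon_+ Q_+)L$ and term-by-term integration is valid in principle, and you are right to flag the $\theta_\pm\partial_z$ piece as the obstruction to a naive Stokes argument: with the integration convention \eqref{Eqn:BerInt}, $\int \mathrm{D}[\cdots]\,\theta_+\partial_z f$ picks out the $z\theta_-$-coefficient of $f$, which is \emph{not} automatically a total $x^\pm$-divergence for a general integrable $f$. Your proposed resolution --- that for the specific $L = D_-\Phi^a D_+\Phi^b\,\eta_{ba}$ with $\Phi$ $z$-constrained these terms ``assemble into total $x^\pm$-divergences'' --- is asserted rather than shown, and in fact the $z\theta_-$-coefficient of $L$ generically involves the higher components $Y^a_{+-}$ and the order-$z^2$ data of $\Phi$, so the cancellation is not transparent. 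The clean way to absorb the $\partial_z$ contribution is precisely the paper's viewpoint: the infinitesimal supertranslation is a coordinate change with unit $\Z_2^2$-Berezinian, and the coordinate-independence of the Berezin integral for integrable sections (Appendix~\ref{App:BerInt}) is exactly the statement that all such boundary-type terms cancel. In other words, your Stokes argument, once made rigorous, collapses back into the change-of-variables logic the paper uses from the outset.
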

\begin{proof}
The Lagrangian  is built from $D_-\Phi^a$ and $D_+\Phi^b$, both of which by construction are invariant under  $\Z_2^2$-supertranslations.  We have already shown that the coordinate volume is invariant, Proposition  \ref{Prop:Invariance}. Thus, the action is invariant.
\end{proof}
We now proceed to present  the component form of this action. We only need to keep track of the $\theta_- \theta_+$ terms due to how the Berezin integral is defined. Thus, a quick calculation shows that
\begin{align*}
D_-\Phi^aD_+\Phi^b\eta_{ba} & \approx  \theta_- \theta_+ \left( \frac{1}{4}\big(\partial_ - X^a \partial_+ X^b\big)\eta_{ba}   +  \frac{1}{2}\big(\psi^a_+ \partial_+\psi^b_+ \big)\eta_{ba}   +  \frac{1}{2}\big(\psi^a_- \partial_-\psi^b_- \big)\eta_{ba}     - F^a_{+-}F^b_{+-} \eta_{ba}  \right).
\end{align*}
Hence the component form of this action is 
$$S_0 = \int_{\R^2}\rmd x^- \rmd x^+ \left( \frac{1}{4}\big(\partial_ - X^a \partial_+ X^b\big)\eta_{ba}   +  \frac{1}{2}\big(\psi^a_+ \partial_+\psi^b_+ \big)\eta_{ba}   +  \frac{1}{2}\big(\psi^a_- \partial_-\psi^b_- \big)\eta_{ba}     - F^a_{+-}F^b_{+-} \eta_{ba}\right),$$
which is a massless Wess--Zumino type model (see \cite{Wess:1974}) in $2$-d with an axillary field (in light-cone coordinates). This action is invariant under the $\Z_2^2$-supersymmetries \eqref{Eqn:CompZ22SUSY}.  Note that as we have insisted on using $z$-constrained $\Z_2^2$-superfields that $\Z_2^2$-supersymmetry does not mix component fields that are of order-zero in $z$ with component fields of higher-order in $z$.  Thus, the field content in this model is minimal.  A direct calculation  gives the expected equations of motion for the component fields 
\begin{align}\label{Eqn:EQMFree}\nonumber 
& \partial_- \partial_+ X^a =0, && F^b_{-+} =0,\\
&\partial _+\psi^c_+ =0, && \partial _-\psi^d_- =0.
\end{align} 
Alternatively, we can work directly with the $\Z_2^2$-space Euler--Lagrange equations\footnote{At this stage we postulate rather than derive these equations. Details about the calculus of variations in this setting remain to be uncovered. }. For notational ease we define $\Phi^a_{\mp} = D_{\mp}\Phi^a$
$$D_-\left( \frac{\partial L}{\partial \Phi_-^a}\right) +D_+\left( \frac{\partial L}{\partial\Phi_+^a}\right)= D_- D_+ \Phi^a  +D_+ D_- \Phi^a = 0.$$
Expanding this out to lowest-order (remembering we have $z$-constrained $\Z^2_2$-superfields) yields  
$$2 F^a_{-+} - \theta_+ \partial_+\psi^a_+ - \theta_- \partial_-\psi^a_- + \frac{1}{2}\theta_- \theta_+ \, \partial_+ \partial_- X^a =0,$$
and so we recover \eqref{Eqn:EQMFree}. \par 
We now extend the kind of sigma model we are studying by now allowing the target manifold to be a finite-dimensional (pseudo-)Riemannian manifold $(M,g)$. This leads to the following definition. 
\begin{definition}\label{Def:NonLinSigmaModel} Let $(M,g)$ be  an $n$-dimensional (pseudo-)Riemannian manifold, and let $\Phi \in \InHom_{\,c}(\cM_2^{(1,1)} , M )$ be a $z$-constrained $\Z_2^2$-superfield with compact support.  Then the \emph{$\Z_2^2$-graded non-linear sigma model action} is
$$\mathrm{S}_0[\Phi] := \int_{\cM_2^{(1,1)}} \mathrm{D}[x^-,x^+,z, \theta_-,\theta_+] \, D_-\Phi^a D_+\Phi^b g_{ba}(\Phi).$$
\end{definition}
\begin{theorem}
The $\Z_2^2$-graded non-linear sigma model action is well-defined and is invariant under\newline  $\Z_2^2$-supertranslations.
\end{theorem}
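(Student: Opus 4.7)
The plan is to mimic the strategy for the linear case (Theorem~\ref{Thm:SigLinInv} and Proposition~\ref{Prop:WellDefLinSig}), replacing the constant Minkowski metric $\eta_{ba}$ by the field-dependent metric $g_{ba}(\Phi)$. There are two things to check: integrability of the Lagrangian Berezin section, and invariance under the $\Z_2^2$-supertranslations~\eqref{Eqn:SupTrans}. Compact support of all component fields, inherited from $\Phi \in \InHom_{\,c}$, will dispose of the boundary/convergence aspects throughout.

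For integrability, the only nontrivial point is to show that $D_-\Phi^a\, D_+\Phi^b\, g_{ba}(\Phi)$ contains no term of the form $z\, L(x^-, x^+)$. I would Taylor-expand $g_{ba}(\Phi) = g_{ba}(X) + (\Phi^c - X^c)\,\partial_c g_{ba}(X) + \tfrac{1}{2}(\Phi^c - X^c)(\Phi^d - X^d)\,\partial_c\partial_d g_{ba}(X) + \cdots$, observing that modulo $\cO(z^2)$ the tangent part $\Phi^c - X^c$ is nilpotent, so the series effectively truncates at each finite order in $z$. Since $\Phi$ is $z$-constrained, the only surviving $z^1$ contribution to $\Phi^a$ below order $z^2$ is $z\theta_-\theta_+\, Y^a_{+-}$, which carries the factor $\theta_-\theta_+$. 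A short computation from~\eqref{Eqn:SUSYCovDer} (in the spirit of Lemma~\ref{Lem:DSupField} but tracking the $z$-terms) then shows that every $z^1$-term of $D_\pm\Phi^a$ carries at least one $\theta$. Multiplying the three factors, every $z^1$-term of the integrand inherits at least one $\theta$, so no $z\, L(x^-, x^+)$ piece is produced.

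For invariance, the structural input is that $[D_\pm, Q_\mp] = [D_\pm, Q_\pm] = 0$, recorded just after~\eqref{Eqn:SUSYCovDer}. Hence $\delta := \epsilon_- Q_- + \epsilon_+ Q_+$ commutes with $D_\pm$, giving $\delta(D_\pm\Phi^a) = D_\pm(\delta\Phi^a)$, while the chain rule yields $\delta\, g_{ba}(\Phi) = (\delta\Phi^c)\,\partial_c g_{ba}(\Phi)$. Assembled, these identities say that $\delta\bigl(D_-\Phi^a\, D_+\Phi^b\, g_{ba}(\Phi)\bigr)$ is just the action of the vector field $\xi := \epsilon_- Q_- + \epsilon_+ Q_+$ on the Lagrangian. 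Invariance of the coordinate Berezin volume under $\xi$ (Proposition~\ref{Prop:Invariance}) then reduces $\delta\mathrm{S}_0[\Phi]$ to the Berezin integral of a $\xi$-total derivative, which vanishes by compact support in the $x^\pm$ directions together with the Berezin rule $\int \mathrm{D}[\theta]\,\partial_\theta(\cdots) = 0$ in the odd ones. The main obstacle will be the integrability step: unlike the linear case, $g_{ba}(\Phi)$ brings its own $z$-dependence into the integrand, and the ``each $z^1$-term carries a $\theta$'' property must be preserved under a three-fold product; the $z$-constrained hypothesis is precisely what kills the dangerous $zG^a$, $z\theta_-\chi^a_+$, $z\theta_+\chi^a_-$ contributions at the source, after which $\Z_2^2$-degree bookkeeping and nilpotency $\theta_\pm^2 = 0$ finish the job, and the invariance argument then runs essentially unchanged from Theorem~\ref{Thm:SigLinInv}.
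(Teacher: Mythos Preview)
Your proposal is correct and follows the same two–step strategy as the paper: Taylor–expand $g_{ba}(\Phi)$ and use the $z$-constrained hypothesis to kill any $z\,L(x^-,x^+)$ term, then invoke invariance of the Berezin volume (Proposition~\ref{Prop:Invariance}) together with the fact that the Lagrangian is assembled from $\Phi$, $D_-\Phi$, $D_+\Phi$. The paper phrases the invariance step as ``built from invariant objects'' (a finite coordinate-shift viewpoint) while you phrase it infinitesimally as a total-derivative argument, but these are the same argument.

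One small gap worth patching: in your total-derivative step you account for the $x^\pm$ directions (compact support) and the $\theta_\pm$ directions (Berezin rule), but the supercharges $Q_\pm$ also contain $\theta_\mp\,\partial_z$ pieces, and $\int \mathrm{D}[z]\,\partial_z(\,\cdot\,)$ is \emph{not} automatically zero in this formalism. Here it is harmless: the $\theta_-\theta_+$-coefficient of $\theta_\mp\,\partial_z L$ at order $z^0$ picks out the $\theta_\pm$-part of the $z^1$-coefficient of $L$, and since $L$ has $\Z_2^2$-degree $(1,1)$ its $z^1$-coefficient has degree $(0,0)$, hence can only be of the form $f(x) + \theta_-\theta_+\,g(x)$; the bare $f(x)$ is excluded precisely by the integrability you just established, and the remaining piece carries both $\theta$'s, so the dangerous contribution vanishes. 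You should say this explicitly, as it is exactly where the $\Z_2^2$ setting differs from ordinary superspace.
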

\begin{proof}We break the proof up into the two statements.\\
\noindent\emph{Well-defined:}
Taylor expanding the metric we see that $g_{ba}(\Phi)$,
$$g_{ba}(\Phi) = g_{ba}(X) + \theta_ + \psi^c_- \frac{\partial g_{ba}(X)}{\partial X^c} +\theta_ - \psi^c_+ \frac{\partial g_{ba}(X)}{\partial X^c} + \theta_-\theta_ + F^c_{+-} \frac{\partial g_{ba}(X)}{\partial X^c} + \theta_-\theta_+ \psi^c_+\psi^d_- \frac{\partial^2 g_{ba}(X)}{\partial X^d \partial X^c} + \cdots, $$
we observe that this is itself $z$-constrained as we have taken $\Phi$ to be $z$-constrained. Thus, the Lagrangian $L =  D_-\Phi^a D_+\Phi^b g_{ba}(\Phi)$ is $z$-constrained (see Proposition \ref{Prop:WellDefLinSig}). Moreover, because  products of compactly supported $\Z_2^2$-superfields  and their derivatives are again compactly supported,  this Lagrangian is compactly supported.

\smallskip
\noindent\emph{Invariance under $\Z_2^2$-supertranslations:} This is evident as the canonical volume is invariant (Proposition \ref{Prop:Invariance}) and the Lagrangian is built from invariant objects; $\Phi^a$, $D_-\Phi^b$ and $D_+\Phi^c$.\\
\end{proof}

\bigskip

Now we address the possibility of including a superpotential.  The immediate problem is that a smooth function $U(\Phi)$ of a degree zero $\Z_2^2$-superpotential is itself degree zero. Thus, it cannot be used as a potential in the linear or the non-linear $\Z_2^2$-graded sigma models. A similar situation occurs in $N= 1$ supersymmetric mechanics where one cannot directly include a potential term, i.e., a term like $U(x)\psi$ is odd and cannot  be included in the action. Manton's solution (see \cite{Manton:1999}) was to include an odd constant in the potential, or in other words, to consider a Grassmann odd potential of the form $W(x) = \alpha \,U(x)$, where $\alpha$ is the odd constant, i.e., an odd element of a chosen Grassmann algebra. This kind of complication was also noticed in $\Z_2^2$-mechanics by Aizawa et al. \cite{Aizawa:2020c}.  We propose a similar solution, by taking a degree $(1,1)$ $\Z_2^2$-superpotential $W(\Phi)$. Mathematically we can understand this as a composition of internal homs. More informally, we are allowing graded constants in both the definition of $\Phi$ and $W(-)$.  The interaction term is then
$$S_{int}[\Phi] = - \int_{\cM_2^{(1,1)}} \mathrm{D}[x^-, x^+, z,\theta_-, \theta_+] \,  W(\Phi),$$
which, assuming $\Phi$ is $z$-constrained, is well-defined and  is invariant under both Lorentz transformations and $\Z_2^2$-supertranslations. Then the total action, taking the target to be Minkowski spacetime for simplicity,  written in component form is
\begin{align*}
 S_0[\Phi] + S_{int}[\Phi]  & = \int_{\R^2}\rmd x^- \rmd x^+ \left( \frac{1}{4}\big(\partial_ - X^a \partial_+ X^b\big)\eta_{ba}   +  \frac{1}{2}\big(\psi^a_+ \partial_+\psi^b_+ \big)\eta_{ba}   +  \frac{1}{2}\big(\psi^a_- \partial_-\psi^b_- \big)\eta_{ba}  \right.\\ 
& \left.    - F^a_{+-}F^b_{+-} \eta_{ba} - F^a_{+-}\frac{\partial W(X)}{\partial X^a} - \psi^a_- \psi^b_+ \frac{\partial^2 W(X) }{\partial X^b \partial X^a}\right).
\end{align*}
The equation of motion for the exotic boson is $F^b_{+-}\eta_{ba} = - \half \frac{\partial W(X)}{\partial X^a}$. This can then be used to eliminate the exotic boson and the action can be written as
\begin{align*}
 S[\Phi]   & = \int_{\R^2}\rmd x^- \rmd x^+ \left( \frac{1}{4}\big(\partial_ - X^a \partial_+ X^b\big)\eta_{ba}   +  \frac{1}{2}\big(\psi^a_+ \partial_+\psi^b_+ \big)\eta_{ba}   +  \frac{1}{2}\big(\psi^a_- \partial_-\psi^b_- \big)\eta_{ba}  \right.\\ 
& \left.   + \frac{1}{4} \frac{\partial W(X)}{\partial X^b}\frac{\partial W(X)}{\partial X^a}\eta^{ab}  -  \psi^a_- \psi^b_+  \,\frac{\partial^2 W(X)}{\partial X^b\partial X^a}\right).
\end{align*}
This is very similar to the standard case of supersymmetry. However, we replace the exotic boson with an exotic potential, the physical meaning of which is somewhat obscured. As an example, one could pick a potential such that $W(\Phi)= \alpha U(\Phi)$, where $\alpha$ is a degree $(1,1)$ constant (that transforms as a Lorentz scalar).  Then, the previous action becomes  
\begin{align*}
 S[\Phi]   & = \int_{\R^2}\rmd x^- \rmd x^+ \left( \frac{1}{4}\big(\partial_ - X^a \partial_+ X^b\big)\eta_{ba}   +  \frac{1}{2}\big(\psi^a_+ \partial_+\psi^b_+ \big)\eta_{ba}   +  \frac{1}{2}\big(\psi^a_- \partial_-\psi^b_- \big)\eta_{ba}  \right.\\ 
&  \left.   + \frac{\alpha^2}{4} \frac{\partial U(X)}{\partial X^b}\frac{\partial U(X)}{\partial X^a}\eta^{ab} -  \alpha \, \psi^a_- \psi^b_+  \,\frac{\partial^2 U(X)}{\partial X^b\partial X^a}\right).
\end{align*}
To avoid complications with the equations of motion for the bosonic sector, it makes sense to assume that $\alpha^2 -1 =0$, so $\alpha$ belongs to a one-dimensional Clifford algebra. Notice that the parameter $\alpha$ appears in the (on-shell) component $\Z_2^2$-supersymmetry transformations. We remark that Clifford algebra-valued parameters have already appeared in the context of supersymmetry (see \cite{Kuzentosova:2008}).  Moreover, Akulov \& Duplij, in the setting of supersymmetric mechanics, proposed that classical solutions to the equations of motion should also depend on Grassmann algebra-valued constants (see \cite{Akulov:1999}). \par 
Rather than explore the properties of general models we will study a specific  example.  In particular, we take the target manifold to be $\R$ and consider the potential
$$W(\Phi) : =  2 \alpha \big(1- \cos( \Phi/2)  \big).$$
This will allow us to highlight some features of these models without the clutter of various indices. The action we will consider  is
\begin{equation}\label{Egn:Z22SinGordAct}
\mathrm{S}[\Phi] := \int_{\cM_2^{(1,1)}} \mathrm{D}[x^-,x^+,z, \theta_-,\theta_+] \,\left( D_-\Phi D_+\Phi  - 2 \alpha \big(1- \cos( \Phi/2)  \big)~\right).
\end{equation}
The action \eqref{Egn:Z22SinGordAct} could be referred to as the \emph{$\Z_2^2$-graded sine-Gordon action}.  Note that the parameter $\alpha$ is required in the Yukawa-like coupling just on degree grounds. The component action (eliminating the auxiliary field via its equation of motion $2 F _{+-}= - \alpha \sin (X/2)$~) is 
\begin{equation}\label{EqnZ22SinGordActComp}
 S[\Phi]    = \int_{\R^2}\rmd x^- \rmd x^+ \left( \frac{1}{4}\partial_ - X \partial_+ X   +  \frac{1}{2}\psi_+ \partial_+\psi_+    +  \frac{1}{2}\psi_- \partial_-\psi_-  
  + \frac{1}{4} \sin^2(X/2) - \frac{1}{2} \alpha \, \psi_- \psi_+  \,\cos(X/2) \right).
  \end{equation}
The equations of motion for the component fields  are 
\begin{align}\label{Eqn:Z22SinGordEqCom}
 \nonumber &\partial_- \partial_+ X - \frac{1}{4}  \sin(X) - \frac{\alpha}{2} \,\psi_- \psi_+ \, \sin(X/2)=0,&&\\
& \partial_+ \psi_+  + \frac{\alpha}{2} \, \psi_- \cos(X/2)=0,
&& \partial_- \psi_- +  \frac{\alpha}{2} \, \psi_+ \cos(X/2)=0.
 \end{align}
 Note that setting $\psi_+ = \psi_- =0$ reduces the system to the classical sine-Gordon equation. These equations of motion can also be derived using the 
 $\Z_2^2$-space Euler--Lagrange equations, which now take the form
$$D_-\left( \frac{\partial L}{\partial \Phi_-}\right) +D_+\left( \frac{\partial L}{\partial\Phi_+}\right) - \frac{\partial W(\Phi)}{\partial \Phi} = 0.$$
For the $\Z_2^2$-graded sine-Gordon  action \eqref{Egn:Z22SinGordAct}, we have what could be called the \emph{$\Z_2^2$-graded sine-Gordon equation}
 $$D_- D_+ \Phi + D_+ D_- \Phi  + \alpha \, \sin(\Phi/2)=0.$$
 Expanding the above to lowest-order yields
 \begin{align*}
 &  \partial_- \partial_+ X + \alpha \, F_{+-} \cos(X/2) - \frac{\alpha}{2} \psi_- \psi_+  \sin(X/2)=0, && 2 F_{+-}+ \alpha \sin(X/2)=0,\\
 & \partial_+ \psi_+  + \frac{\alpha}{2} \, \psi_- \cos(X/2)=0,
&& \partial_- \psi_- +  \frac{\alpha}{2} \, \psi_+ \cos(X/2)=0.
 \end{align*}
Clearly this system reduces to \eqref{Eqn:Z22SinGordEqCom} when eliminating the auxiliary field from the equations of motion for the boson. The on-shell $\Z_2^2$-supersymmtery transformations take the form 
 \begin{align*}
 & \delta X  = \epsilon_- \psi_+ +   \epsilon_+ \psi_-,& \\
 & \delta \psi_+ = - \frac{1}{2}\left(\epsilon_- \,\partial_-X + \epsilon_+ \alpha \sin(X/2)  \right), &  \delta \psi_- = - \frac{1}{2}\left(\epsilon_+ \,\partial_+X + \epsilon_- \alpha \sin(X/2)  \right).
 \end{align*}
Note the aforementioned explicit presence of the parameter $\alpha$ in these transformations. We proceed to check that the component action is invariant under the $\Z_2^2$-supersymmetry transformations. As standard, the action itself is not invariant but only quasi-invariant, i.e., $\delta L = \partial_-V^- + \partial_+V^+ $.  Writing $L = L_0 + L_+ + L_- + L_1 + L_{-+}$ for each term in  the Lagrangian of \eqref{EqnZ22SinGordActComp}, a direct computation produces
\begin{align*}
& \delta L_0 = \frac{1}{4} \epsilon_-\big( \partial_- \psi_+ \partial_+ X + \partial_+ \psi_+ \partial_-X\big) + \frac{1}{4} \epsilon_+ \big( \partial_- \psi_- \partial_+ X + \partial_+ \psi_- \partial_-X\big),\\
& \delta L_+ = - \frac{1}{4}\epsilon_- \big(\partial_+ \psi_+ \partial_-X   - \psi_+ \partial _+ \partial _- X \big) - \frac{1}{4}\epsilon_+ \big( \alpha \sin(X/2)\partial_+ \psi_+  - \alpha \partial _+X  ~ \half \cos(X/2) \psi_+\big),\\
& \delta L_- = - \frac{1}{4}\epsilon_+ \big(\partial_- \psi_- \partial_+X   - \psi_- \partial _- \partial _+ X \big) - \frac{1}{4}\epsilon_- \big( \alpha \sin(X/2)\partial_- \psi_-  - \alpha \partial _-X  ~ \half \cos(X/2) \psi_-\big),\\
&\delta L_1 = \frac{1}{8}\epsilon_- \psi_+ \sin(X) + \frac{1}{8}\epsilon_+ \psi_-\sin(X),\\
&\delta L_{-+} = -\frac{1}{4}\epsilon_+ \big(\alpha \partial_+X ~ \cos(X/2)\psi_+ + \psi_-  \, \half \sin(X) \big) - \frac{1}{4} \epsilon_- \big(\psi_+  \, \half \sin(X) + \alpha  \partial_- X ~\cos(X/2) \psi_-\big).
\end{align*}
Collecting terms we see that
\begin{equation}\label{Eqn:DelLagSinGor}
\delta L =\partial_-V^- + \partial_+V^+ =  \partial_- \left(\frac{\epsilon_-}{4}\big(\psi_+ \partial_+X - \alpha \, \sin(X /2) \psi_- \big) \right) + \partial_+ \left(\frac{\epsilon_+}{4}\big(\psi_- \partial_-X - \alpha \, \sin(X /2) \psi_+ \big) \right).
\end{equation}
Thus, as expected, the component Lagrangian of the $\Z_2^2$-graded sine-Gordon action is quasi-invariant under the $\Z_2^2$-supersymmetry transformations.   Application of Noether's theorem yields the conserved currents. We write $\phi^A$ for the collection of (relevant) component fields and set $\phi^A_{\mp} = \partial_{\mp} \phi^A$.  Then a quick calculation produces
\begin{align*}
& J^{--} := \frac{\delta \phi^A}{\delta \epsilon_-}\left( \frac{\partial L}{\partial \phi^A_-}\right) - \frac{\partial V^-}{\partial \epsilon_-} = \half \alpha  \, \sin(X/2) \psi_- = \left( \frac{\delta \psi_-}{\delta \epsilon_-}\right)\psi_-,\\
& J^{-+} := \frac{\delta \phi^A}{\delta \epsilon_-}\left( \frac{\partial L}{\partial \phi^A_+}\right) - \frac{\partial V^+}{\partial \epsilon_-} = \half \partial_-X \, \psi_+ = {-}\left( \frac{\delta \psi_+}{\delta \epsilon_-}\right)\psi_+,\\
& J^{+-} := \frac{\delta \phi^A}{\delta \epsilon_+}\left( \frac{\partial L}{\partial \phi^A_-}\right) - \frac{\partial V^-}{\partial \epsilon_+} = \half \partial_+X \, \psi_- = {-}\left( \frac{\delta \psi_-}{\delta \epsilon_+}\right)\psi_-,\\
& J^{++} := \frac{\delta \phi^A}{\delta \epsilon_+}\left( \frac{\partial L}{\partial \phi^A_+}\right) - \frac{\partial V^+}{\partial \epsilon_+} = \half \alpha  \, \sin(X/2) \psi_+ = \left( \frac{\delta \psi_+}{\delta \epsilon_+}\right)\psi_+.
\end{align*}
Note that $J^{--}$ and $J^{-+}$ are of $\Z_2^2$-degree $(0,1)$, while $J^{+-}$ and $J^{++}$ are of $\Z_2^2$-degree $(1,0)$. Moreover, as expected, these currents are spinor in nature, i.e., under Lorentz boots they transform as
\begin{align*}
& J^{--} \mapsto \rme^{-\half \beta}\, J^{--}, && J^{-+} \mapsto \rme^{\frac{3}{2} \beta}\, J^{-+},\\
& J^{+-} \mapsto \rme^{-\frac{3}{2} \beta}\, J^{+-}, && J^{++} \mapsto \rme^{\half \beta}\, J^{++}.
\end{align*}
A quick calculation verifies  these currents are indeed conserved when the equations of motion hold, i.e.,
\begin{align*}
& \partial_- J^{--} + \partial_+ J^{-+} = \half \left(\partial_+ \psi_+ + \half\alpha \, \psi_-\cos(X/2)  \right)\partial_- X + \half \left( \partial_- \partial_+ X - \frac{1}{4}  \sin(X) - \frac{\alpha}{2} \,\psi_- \psi_+ \, \sin(X/2)\right)\psi_+,\\
& \partial_- J^{+-} + \partial_+ J^{++} = \half \left(\partial_- \psi_- + \half\alpha \, \psi_+\cos(X/2)  \right)\partial_+ X + \half \left( \partial_- \partial_+ X - \frac{1}{4}  \sin(X) - \frac{\alpha}{2} \,\psi_- \psi_+ \, \sin(X/2)\right)\psi_-,
\end{align*}
which clearly vanish on-shell (see \eqref{Eqn:Z22SinGordEqCom}). Given the importance of sine-Gordon and supersymmetric sine-Gordon in the theory of integrable systems  and soliton theory, we firmly believe this $\Z_2^2$-graded version deserves further study. It would certainly   be interesting to construct  solutions and compare them with solutions of the $N=1$ and $N=2$ supersymmetric sine-Gordon equations. It is clear that the fact that $\psi_+$ and $\psi_-$ are of different $\Z_2^2$-degree and the presence of a degree $(1,1)$ constant complicates finding classical solutions.   A detailed study of solutions is outside the scope of this paper. 

\subsection{A Model with propagating exotic bosons}
We now briefly examine a simple model in which the exotic boson is propagating. To do this we consider a scalar $\Z_2^2$-superfield that carries $\Z_2^2$-degree $(1,1)$. That is, we consider the target space to be $\R^{0|1,0,0}$.  Naturally, we insist that this $\Z_2^2$-superfield $\Psi \in \InHom_{\, c}(\cM_2^{(1,1)},\R^{0|1,0,0} )$  be compactly supported and $z$-constrained as before.
$$\Psi(x^-, x^+, z, \theta_-, \theta_+) = Y(x^-, x^+) + \theta_- \chi_+(x^-, x^+) + \theta_+ \chi_-(x^-, x^+) + \theta_- \theta_+ G(x^-, x^+) + \cdots,$$
where the degrees are 
$$\big( \underbrace{G}_{(0,0)}, ~   \underbrace{Y}_{(1,1)},~  \underbrace{\chi_-}_{(0,1)}, ~  \underbrace{\chi_+}_{(1,0)}\big). $$
We can directly include a potential by considering a formal power series in the $\Z_2^2$-superfield. We require that the potential be degree $(1,1)$ and so, assuming no constants of non-zero degree, the power series must  be of the form 
$$U(\Psi) := \sum_{k=0}^{n }\frac{1}{(2k+1)!} \,a_{2k+1} (\Psi)^{2k+1}, $$
where $a_{k+1}\in \R$ and $n \in \mathbb{N}$.
\begin{definition}\label{Def:ExoticScalar} Let $\Psi \in \InHom_{\,c}(\cM_2^{(1,1)} , \R^{0|1,0,0} )$ be a $z$-constrained $\Z_2^2$-superfield with compact support.  Then the \emph{$\Z_2^2$-graded  exotic scalar field theory action} is
$$\mathrm{S}[\Psi] := \int_{\cM_2^{(1,1)}} \mathrm{D}[x^-,x^+,z, \theta_-,\theta_+] \, \left( D_-\Psi D_+\Psi - \sum_{k=0}^n\frac{1}{(2k+1)!} \,a_{2k+1} (\Psi)^{2k+1} \right).$$
\end{definition}
The following theorem is evident given our earlier discussions.
\begin{theorem}
The $\Z_2^2$-graded  exotic scalar field theory action is well-defined, invariant under two-dimensional Lorentz boosts and $\Z_2^2$-supertranslations.
\end{theorem}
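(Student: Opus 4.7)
The plan is to handle the three assertions — well-definedness, Lorentz invariance, and $\Z_2^2$-supertranslation invariance — in order, closely mirroring the template of the corresponding proofs for the linear and non-linear $\Z_2^2$-graded sigma models established earlier in the section.

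For well-definedness I need to verify that the Lagrangian Berezin section is integrable in the sense of Subsection \ref{SubSec:Int}, that is, compactly supported and free of any $zL(x^-,x^+)$ contribution. Compact support is automatic, since sums, products, and applications of $D_\mp$ preserve compact support of the component fields of $\Psi$. For the absence of a $zL(x^-,x^+)$ term I will use that $z$-constrainedness of the degree $(1,1)$ superfield $\Psi$ forces its expansion to take the form $\Psi = \Psi_0(x,\theta) + z\theta_-\theta_+ H + \cO(z^2)$, the vanishing of the $z$, $z\theta_-$, and $z\theta_+$ coefficients being the degree-$(1,1)$ analogue of the constraint imposed on degree-zero superfields. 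Then each power $\Psi^{2k+1}$ retains this structure (products of $z$-constrained fields remain $z$-constrained, exactly as exploited in the well-definedness proof for the non-linear sigma model), so $U(\Psi)$ contains no bare $z$ or $z\theta_\pm$ terms; and a short direct computation using \eqref{Eqn:SUSYCovDer}, together with $\theta_\pm^2 = 0$, shows that $D_-\Psi\, D_+\Psi$ likewise has no $zL(x^-,x^+)$ component.

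Lorentz invariance is almost immediate: Proposition \ref{Prop:Invariance} handles the Berezin volume, and the integrand is a Lorentz scalar by degree counting. Reading the boost weights of $\theta_\mp$ off from \eqref{Eqn:Lorentz} gives $D_- \mapsto \rme^{\beta/2}D_-$ and $D_+ \mapsto \rme^{-\beta/2}D_+$, so $D_-\Psi\cdot D_+\Psi$ has net weight zero, while $U(\Psi)$ is manifestly scalar since $\Psi$ is. For $\Z_2^2$-supertranslation invariance the argument of Theorem \ref{Thm:SigLinInv} applies verbatim: Proposition \ref{Prop:Invariance} gives invariance of the coordinate Berezin volume, while the relations $[D_\mp,Q_\pm]=0$ displayed just after \eqref{Eqn:SUSYCovDer} make $\Psi$, $D_-\Psi$, and $D_+\Psi$ invariant building blocks, so both $D_-\Psi\, D_+\Psi$ and the smooth function $U(\Psi)$ are invariant.

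The only non-routine step is the well-definedness argument, and within it the verification that the $z$-constraint propagates through taking odd powers and through the covariant derivatives. The subtlety is that $\Psi$ carries $\Z_2^2$-degree $(1,1)$ rather than $(0,0)$, so the correct form of \emph{$z$-constrained} must first be identified by degree matching; once that is done, the book-keeping is mechanical and relies only on $\theta_\pm^2 = 0$ together with the fact that $z$ itself carries no $\theta$-content.
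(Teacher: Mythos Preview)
Your proof is correct and follows exactly the template the paper uses for the earlier sigma-model theorems; indeed, the paper's own proof of this statement is simply the one-line remark that it is ``evident given our earlier discussions,'' so you have actually supplied the details (the $z$-constrained bookkeeping for a degree $(1,1)$ field, the closure under odd powers, and the appeal to Proposition~\ref{Prop:Invariance} and the relations $[D_\mp,Q_\pm]=0$) that the paper leaves implicit.
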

A direct calculation shows that the component of the action, ignoring the potential for simplicity, is
\begin{equation}\label{Eqn:FreeExoticAct}
S_0[\Psi] = \int_{\R^2} \rmd x^- \rmd x^+ \left( - \frac{1}{4} \partial_- Y \partial_+Y +\frac{1}{2}\chi_- \partial_ -\chi_-  + \frac{1}{2}\chi_+ \partial_+ \chi_+ + G^2 \right).
\end{equation}
A direct calculations shows that the equations of motion are
\begin{align}\label{Eqn:ExoEqm}\nonumber
& G=0, & 
 \partial_ -\partial_ + Y =0& \\ 
& \partial_- \chi_- = 0,
& \partial_+ \chi_+ = 0.               
\end{align}
 As before, these equations of motion can be deduced from the $\Z_2^2$-space Euler--Lagrange equations. The reader can quickly verify that the $\Z_2^2$-space equations of motion reduce to $D_- \Psi D_+\Psi= 0$ and that this agrees with the component equations of motion \eqref{Eqn:ExoEqm}. \par 
The component form of $\Z_2^2$-supersymmetry are 
\begin{align}\nonumber
& \delta G = - \frac{1}{2} \epsilon_- \partial_- \chi_- - \frac{1}{2} \epsilon_+ \partial_+ \chi_+, & 
 \delta Y = \epsilon_- \chi_+  + \epsilon_+ \chi_-, & \\ 
& \delta \chi_- = - \frac{1}{2} \epsilon_+ \partial_+ Y + \epsilon_- G,
& \delta \chi_+ = - \frac{1}{2} \epsilon_-\partial_-Y + \epsilon_+ G.               
\end{align}
A direct calculation (and using the equations of motion \eqref{Eqn:ExoEqm}) shows that
$$\delta L  = \partial_- V^- + \partial_+ V^+ = \partial_- \left( \frac{\epsilon_-}{4} \, \partial _+ Y \, \chi_+  \right) + \partial_+ \left( \frac{\epsilon_+}{4}\, \partial _- Y \, \chi_- \right),$$
and so, as expected, the component Lagrangian associated with the exotic action \eqref{Eqn:FreeExoticAct} is quasi-invariant under  $\Z_2^2$-supersymmetry. The two non-zero Noether currents are 
\begin{align*}
& J^{-+} = \frac{1}{2}\partial_-Y \chi_+, &&  J^{++} = \frac{1}{2}\partial_+Y \chi_-,
\end{align*} 
which are of $\Z_2^2$-degree $(0,1)$ and $(1,0)$, respectively. Under Lorentz boots they transform as spinors, i.e., 
\begin{align*}
& J^{-+} \mapsto \rme^{\frac{3}{2} \beta}\, J^{-+},&&  J^{+-} \mapsto \rme^{-\frac{3}{2} \beta}\, J^{+-}.
\end{align*}
It is easy to observe that these currents are indeed conserved on-shell, i.e., $\partial_+ J^{-+} =0$ and $\partial_- J^{+-} =0$.

\section{Concluding Remarks}\label{Sec:ConRem}
In this paper, we proposed a double-graded version of the $\mathcal{N} = (1,1)$ supertranslation algebra in two dimensions, built a $\Z_2^2$-space realisation thereof and used this to construct $\Z_2^2$-supersymmetric sigma models with either auxiliary or propagating exotic bosons. The focus has been on classical aspects of the theory, showing both the novel features and potential problems.  There is much that remains to be done, including quantisation, the construction sigma models with  general $\Z_2^2$-manifold target spaces (including $\Z_2^2$-Lie groups), models with higher dimensional sources, and the development of $\Z_2^n$-graded models ($n >2$). The immediate issue here is that the theory of integration on $\Z_2^n$-manifolds is not yet properly developed. Thus, at the time of writing, it is impossible to mimic superspace methods in any generality. Many aspects of the theory of $\Z_2^n$-manifolds are currently being developed.  None-the-less, in this paper, we have shown that relatively simple, yet novel  double graded classical models in two-dimensions exist.  An interesting feature is that the grading puts restrictions on the actions, quite independently of the symmetry. In particular, the Lagrangian Berezin sections must be degree $(1,1)$, so that the component action is degree $(0,0)$. This adds another layer of complication when directly trying to mimic superspace constructions. For example, to include interactions in some of the models, we require graded constants in the potentials.     \par
The physical relevancy of $\Z_2^2$-graded theories, and in particular $\Z_2^2$-supersymmetry is not at all clear. The exact r\^{o}le of exotic degree $(1,1)$ bosons in physics  is an open question. Potentially less confusing are the two sectors of mutually commuting spinors. When one has two independent spinors it is usually assumed that they mutually anticommute. However, this is an assumption that can be  dropped, and taking the spinors to mutually commute is consistent (see  Aste \& Chung \cite{Aste:2016}).  What would be a clear phenomenological signal that nature utilises $\Z_2^2$-gradings and this novel kind of supersymmetry? This is the key question and one that may be answered as further models are developed.

\section*{Acknowledgements}
The author thanks Norbert Poncin \& Eduardo Ibarguengoytia for fruitful discussions about integration on $\Z_2^n$-manifolds and related topics.  A special thank you goes to Steven Duplij for his encouragement and being the impetus for this work.

%%%%%%%%%%%%%%%%%%%%%%%%%%%%%%%%%%%%%%%%%%%%

\appendix

\section{The $\Z_2^2$-Berezinian}\label{App:Ber}
The $\Z_2^2$-Berezinian  is directly related to the theory of quasi-determinants as given by Gel'fand \& Retakh \cite{Gelfand:1991} and was first described in \cite{Covolo:2012}. The classical Berezinian is defined in terms of the determinant of its blocks (for details, see for example \cite{Manin;1997}). Similarly,  the $\Z_2^n$-Berezinian is built from the  $\Z_2^n$-graded determinant. However, for $n=1$ and $n =2$,  the $\Z_2^n$-graded determinant is the same as the classical determinant. For $n>2$ the situation is more complicated and the reader should consult \cite{Covolo:2012}.
\begin{definition}\label{Def:Z22Ber}
Let $X$ be a degree zero invertible matrix with entries from a $\Z_2^2$-graded, $\Z_2^2$-commutative, unital algebra $\mathcal{A}$, written in block form
\[
X=
\left(
\begin{array}{c|c}
A & B \\
\hline
C & D
\end{array}
\right)
.\]
Then the \emph{$\Z_2^2$-Berezinian} is the group homomorphism 
$$\Z_2^2\textrm{Ber} : \textrm{GL}^0(\mathcal{A}) \rightarrow (\mathcal{A}^0)^\times,$$
defined as
$$\Z_2^2\textrm{Ber}(X) := \det(A-BD^{-1}C)(\det D)^{-1}.$$
\end{definition}
If $B$ and or $C$ are zero, then $\Z_2^2\textrm{Ber}(X) := \det(A)(\det D)^{-1}.$

\section{Berezin integration on $\R^{2|1,1,1}$}\label{App:BerInt}
In this appendix, we follow Poncin \cite{Poncin:2016} closely making only the necessary changes to incorporate two coordinates of degree zero. The fundamental results are the same, namely, the theory of Berezin integration we use is mathematically well-founded. This is a priori not obvious. It is clear that as $\Z_2^2$-manifolds $\cM_2^{(1,1)} = \R^{2|1,1,1}$. As such we will change notation slightly to reflect that fact, and, in general, we are not working with any privileged class of coordinates or assign any physical meaning to them.  We will employ global coordinate systems of the form
 $$\{\underbrace{t}_{(0,0)}, ~ \underbrace{s}_{(0,0)}, ~ \underbrace{z}_{(1,1)}, ~\underbrace{\theta}_{(0,1)} ,~ \underbrace{\eta}_{(1,0)}  \}.$$
 \begin{definition}
 An \emph{integrable Berezin section} of $\R^{2|1,1,1}$ is a compactly supported Berezin section $\Bsigma = \mathrm{D}[t,s,z, \theta, \eta] \, \sigma(t,s,z,\theta, \eta)$ that does not contain the monomial $z \sigma(t,s)$.
 \end{definition}
 Using multi-index notation, we write
 $$\sigma(t,s,z, \theta, \eta) = \sum z^\kappa \theta^\alpha \eta^\beta \, \sigma_{\beta \alpha \kappa}(t,s),$$
 where $\kappa \in \mathbb{N}$, and $\alpha, \beta \in \{0,1\}$. 
 \begin{definition}
 Let $\Bsigma$ be an integrable Berezin section of $\R^{2|1,1,1}$. Furthermore, let us assume that an orientation as been chosen on $\R^2$. Then the \emph{Berezin integral} of $\Bsigma$ is defined as
 $$\int_{\R^{2|1,1,1}} \Bsigma := \int_{\R^2}\rmd t \rmd s \, \sigma_{110}(t,s).$$
 \end{definition}
 From this definition, it is natural to define $\int\mathrm{D}[z] z^\kappa = 0$ for all $\kappa > 0$, and $\int\mathrm{D}[z] =1$.  Or better, we will chose a formal integration interval and write $\int_0^1 \rmd z =1$.  This must be treated as a formal expression and we consider $\rmd z$ to be of $\Z_2^2$-degree $(1,1)$ and similarly for the symbol $\int_0^1$. Thus, the expressions are consistent. Note that we now have to take care with the formal integration interval when changing coordinates. In particular, if $z' = \theta \eta\, \phi_{110}^z + z \,\phi_{001}^z(t,s) + z^2 \theta \eta\, \phi_{112}^z(t,s) + \cO(z^3)$, then
 $$\int\rmd z' = \int \rmd z \, \left( \frac{\partial z'}{\partial z}\right)=  \int \rmd z \, (  \phi_{001}^z(t,s) + 2 z \theta \eta \,\phi_{112}^z(t,s) + \cO(z^2)) = \int \rmd z\,  \phi_{001}^z(t,s).$$
 Thus, we have to agree that the formal integral interval $z' \in [0,1]$ must transform into the formal integral interval $z \in [0, (\phi_{001}^z)^{-1}]$. This will be important when proving that the Berezin integral is independent of the coordinate system used.
\begin{remark}
Note that on $\cM_2^{(1,1)}$ that the coordinate $z$ is  Lorentz scalar and that $\Z_2^2$-supertranslations are linear in $z$ with unit coefficient. Thus, in both cases, $\partial z'/ \partial z =1$ and so the formal integration limit is not affected. Of course, as illustrated above, this is not the case for general coordinate transformations. 
\end{remark} 

To show that the Berezin integral as given here and used in Subsection \ref{SubSec:Int} is well-defined, we need to show the following.
\begin{enumerate}
\item \label{List:1} For any function $\sigma$ the property of having no term of the form $z\sigma_z(t,s)$ is independent of the chosen coordinates,
\item \label{List:2}that the coordinate Berezin volume does not introduce a term of the from $z f(t,s)$ under changes of coordinates, and finally,
\item \label{List:3}that the definition of the Berezin integral is independent of the chosen coordinates. 
\end{enumerate}
We will write general coordinate transformations as
\begin{align*}
\phi^*t' = t'(t,s,\theta, \eta, z) & = \phi^t(t,s) + z  \theta \eta \, \phi^t_{111}(t,s) + \cO(z^2),\\
\phi^*s' = s'(t,s,\theta, \eta, z) & = \phi^s(t,s) + z \theta \eta \, \phi^s_{111}(t,s) + \cO(z^2),\\
\phi^*z' = z'(t,s,\theta, \eta, z) & =  z  \, \phi^z_{001}(t,s) + \theta \eta \, \phi^z_{110}(t,s)  + \cO(z^3),\\
\phi^*\theta' = \theta'(t,s,\theta, \eta, z) & = \theta \,\phi_{010}^\theta(t,s) + z \eta  \, \phi^\theta_{101}(t,s) + \cO(z^2),\\
\phi^*\eta' = \eta'(t,s,\theta, \eta, z) & =  \eta  \, \phi^\eta_{100}(t,s) + z \theta  \, \phi^\eta_{011}(t,s)  + \cO(z^2),
\end{align*}
remembering that such transformations must preserve  the $\Z_2^2$-degree and that both $\theta$ and $\eta$ are nilpotent. The subscripts denote the number of formal coordinates in reverse order. In general, we have a formal power series in $z$. Note that the transformation rule for $z$ is liable to spoil the definition of the Berezin integral.  We will denote the Jacobian of these coordinate changes as
$$J = \frac{\partial(t', s',z', \theta', \eta')}{\partial(t,s,z,\theta,\eta, )}.$$
Consider a general monomial in the formal coordinates
$$z'^\kappa  \theta'^\alpha  \eta'^\beta,$$ 
where $\kappa \in \mathbb{N}$ and $\alpha, \beta \in \{ 0,1 \}$. Assuming that $\kappa \neq 1$, then applying the change of coordinates to the monomial never produces a term like $z f(t,s)$. If $\kappa =1$ and both $\alpha$ and $\beta$ are not zero, we reach the same conclusion. Then only  the monomial $z'$ can pick up a term of the form $zf(t,s)$ under coordinate transformations. As there are terms in the transformation rules for $t$ and $s$ that contain no formal coordinates (invertibility shows that these terms must be present) we conclude that only a term like $z'\sigma'(t', s')$ transforms in a way to produce  terms of the form $z \sigma(t,s)$. Thus, we have established \eqref{List:1}.
\begin{proposition}\label{Prop:NoZTermSigma}
If  in some choice of coordinates a function $\sigma'(t', s', z',\theta' , \eta')$  does not contain  the monomial $z'\sigma'(t', s')$, then in any other coordinate system the function $\sigma(t, s, z, \theta , \eta)$ similarly  does not contain  the monomial $z\sigma_{z}(t, s)$.
\end{proposition}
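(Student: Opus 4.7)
The plan is to make the degree-tracking argument from the paragraphs preceding the statement rigorous. Expand
$$\sigma'(t',s',z',\theta',\eta') = \sum_{\kappa\in\mathbb{N},\,\alpha,\beta\in\{0,1\}} z'^{\kappa}\theta'^{\alpha}\eta'^{\beta}\,\sigma'_{\beta\alpha\kappa}(t',s'),$$
substitute the coordinate change $t'=t'(t,s,z,\theta,\eta)$, $\ldots$, $\eta'=\eta'(t,s,z,\theta,\eta)$, and show that no multi-index $(\kappa,\alpha,\beta)\neq(1,0,0)$ can contribute a bare $z\,g(t,s)$-monomial. Since the hypothesis is precisely that $\sigma'_{001}(t',s')=0$, this suffices.

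First I would record the structural constraints forced by $\Z_2^2$-degree preservation. Because $t',s'$ are of degree $(0,0)$, their only admissible monomials in $(z,\theta,\eta)$ are $z^{2k}$ and $z^{2k+1}\theta\eta$; in particular, $t'-\phi^t(t,s)$ and $s'-\phi^s(t,s)$ contain no bare linear $z$-term. The degree-$(1,1)$ coordinate $z'$ decomposes as $z'=A+B$ with $A=z\phi^z_{001}+z^3\phi^z_{003}+\cdots$ (odd powers of $z$) and $B=\theta\eta(\phi^z_{110}+z^2\phi^z_{112}+\cdots)$, while $\theta'=\theta P(z)+\eta Q(z)$ and $\eta'=\eta R(z)+\theta S(z)$ for suitable $z$-series $P,Q,R,S$. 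Combined with the nilpotency identities $\theta^2=\eta^2=(\theta\eta)^2=0$ and the commutation $\theta\eta=\eta\theta$, one checks immediately that $B\theta=B\eta=0$ and that $A,B$ commute (both of degree $(1,1)$), so $z'^{\kappa}=(A+B)^{\kappa}=A^{\kappa}+\kappa A^{\kappa-1}B$.

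Then I would run through the three cases on $(\alpha,\beta)$. If $\alpha=\beta=0$: $A^{\kappa}$ starts at $z$-order $\kappa$ and $\kappa A^{\kappa-1}B$ carries a $\theta\eta$ factor, while $\sigma'_{00\kappa}(t',s')$ Taylor-expanded about $(\phi^t,\phi^s)$ introduces only corrections of order $z^{2}$ or carrying $\theta\eta$ (since $t'-\phi^t$ and $s'-\phi^s$ do); hence the product contains a bare $z\,g(t,s)$ term only if $\kappa=1$, giving exactly $z\phi^z_{001}(t,s)\,\sigma'_{001}(\phi^t(t,s),\phi^s(t,s))$. If $\alpha+\beta=1$: the identities $B\theta=B\eta=0$ give $z'^{\kappa}\theta'=A^{\kappa}(\theta P+\eta Q)$ and symmetrically for $\eta'$, so every surviving monomial carries a $\theta$ or $\eta$ factor; multiplication by $\sigma'_{\beta\alpha\kappa}(t',s')$, whose expansion is either $\theta,\eta$-free or carries $\theta\eta$, preserves the property (the $\theta\eta$ branch annihilates against a leftover $\theta$ or $\eta$). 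If $\alpha=\beta=1$: $\theta'\eta'=\theta\eta(PR+QS)$ and $z'^{\kappa}\theta'\eta'=A^{\kappa}\theta\eta(PR+QS)$ (since $B\theta\eta=0$); multiplying by $\sigma'_{11\kappa}(t',s')$ either keeps a $\theta\eta$ prefactor or produces $(\theta\eta)^2=0$.

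Combining the cases, the unique source of a bare $z\,g(t,s)$ contribution is $z'\sigma'_{001}(t',s')$, which vanishes by hypothesis, so $\sigma$ contains no such term either. The main obstacle is purely bookkeeping: one must verify that the Taylor expansion of $\sigma'_{\beta\alpha\kappa}(t',s')$ around $(\phi^t(t,s),\phi^s(t,s))$ can never supply a missing bare $z$-factor on its own. This is where the degree-preservation observation on $t',s'$ is essential, and once that is in hand the remainder is a mechanical sign-and-power check powered by the nilpotency of $\theta,\eta$ and $\theta\eta$.
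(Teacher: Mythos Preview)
Your argument is correct and follows the same strategy as the paper: a case analysis over the monomials $z'^{\kappa}\theta'^{\alpha}\eta'^{\beta}$ showing that only $(\kappa,\alpha,\beta)=(1,0,0)$ can contribute a bare $z\,g(t,s)$ term after the coordinate change. The paper's argument is the brief paragraph immediately preceding the proposition, which simply asserts the outcome of each case; your version supplies the algebraic mechanism behind those assertions --- the decomposition $z'=A+B$ with $B=\theta\eta(\cdots)$, the binomial identity $z'^{\kappa}=A^{\kappa}+\kappa A^{\kappa-1}B$, the annihilation rules $B\theta=B\eta=B\theta\eta=0$, and the observation that the Taylor expansion of $\sigma'_{\beta\alpha\kappa}(t',s')$ about $(\phi^t,\phi^s)$ only introduces even $z$-powers or a $\theta\eta$ factor. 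One cosmetic point: your series $P,Q,R,S$ should of course be allowed to depend on $(t,s)$ as well as $z$, but this is implicit and does not affect the argument.
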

We will say that a \emph{monomial of type} $z^\kappa  \theta^\alpha \eta^\beta$ is any monomial of the form $z^k  \theta^a \eta^a$, with $k \geq \kappa$, $a \geq \alpha$ and $b \geq \beta$. Poncin \cite[Theorem 3]{Poncin:2016} states the following.
\begin{proposition}\label{Prop:MonTypBer}
A monomial of the type $z^\kappa  \theta^\alpha \eta^\beta$ in an entry of $J$ induces  in $\Z_2^2\Ber(J)$ only terms of the same type.
\end{proposition}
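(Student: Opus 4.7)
The plan is to exhibit the set of all formal power series in $(z,\theta,\eta)$ with $C^\infty(\R^2)$-coefficients whose monomials are all of type $z^\kappa\theta^\alpha\eta^\beta$ as a two-sided ideal $\mathcal{I}_{\kappa\alpha\beta}$ in the ambient power-series algebra, and then to show that each algebraic operation that enters the computation of $\Z_2^2\Ber(J)$ preserves this ideal. The ideal property is immediate: any product $z^k\theta^a\eta^b\cdot z^{k'}\theta^{a'}\eta^{b'}$ is either zero (by nilpotency of $\theta$ and $\eta$) or has each exponent at least as large as those of $z^k\theta^a\eta^b$, so multiplying a monomial of type $z^\kappa\theta^\alpha\eta^\beta$ by anything in the algebra lands again in $\mathcal{I}_{\kappa\alpha\beta}$. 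Addition preserves the ideal trivially.

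Next I would decompose $J=J_0+J_m$, where $J_m$ collects the monomials of type $z^\kappa\theta^\alpha\eta^\beta$ in each entry and $J_0$ collects the remainder, and verify that the base matrix $J_0$ inherits from $J$ the block structure and, crucially, that its even-even and odd-odd blocks $A_0$ and $D_0$ are still invertible. This invertibility is a standing consequence of the admissibility of the coordinate transformation on $\R^{2|1,1,1}$: the zeroth-order (coordinate-independent) terms of the entries of $J$ already form an invertible block-diagonal matrix, and these zeroth-order terms cannot be of type $z^\kappa\theta^\alpha\eta^\beta$ for any $(\kappa,\alpha,\beta)\neq(0,0,0)$, so they survive in $J_0$. (For the trivial type $(0,0,0)$ there is nothing to prove, since every monomial is of that type.)

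Closure under inversion follows from the formal Neumann expansion: for $M=M_0+\Delta$ with $M_0$ invertible and $\Delta$ having entries in $\mathcal{I}_{\kappa\alpha\beta}$, one has
\begin{equation*}
M^{-1}=M_0^{-1}\sum_{k\geq 0}(-1)^k(\Delta M_0^{-1})^k = M_0^{-1}+\textrm{corrections in }\mathcal{I}_{\kappa\alpha\beta},
\end{equation*}
with the sum being well-defined as a formal power series because $\Delta$ raises the formal degree at every step (and is strictly nilpotent on the finite-dimensional odd-odd block). Applying this successively to $D$ and to the Schur complement $A-BD^{-1}C$, and then invoking the polynomial dependence of $\det$ on its matrix entries, shows that $\Z_2^2\Ber(J)-\Z_2^2\Ber(J_0)\in \mathcal{I}_{\kappa\alpha\beta}$. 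Hence any monomial of type $z^\kappa\theta^\alpha\eta^\beta$ in $J$ contributes to $\Z_2^2\Ber(J)$ only terms of the same type, as required.

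The principal obstacle is the careful treatment of $D^{-1}$: although $D$ is a $2\times 2$ block over a non-commutative $\Z_2^2$-graded algebra with formal power series entries, one must verify that the Neumann expansion is legitimate in this setting and that the bookkeeping of $\Z_2^2$-signs (which affect $\det$ as explained in \cite{Covolo:2012}) does not produce spurious terms outside $\mathcal{I}_{\kappa\alpha\beta}$. This is handled by noting that $\mathcal{I}_{\kappa\alpha\beta}$ is stable under the sign changes dictated by the Koszul rule, since those signs depend only on degrees of factors, not on how many additional powers of $z,\theta,\eta$ are present.
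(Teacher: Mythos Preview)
Your argument is correct. The key observation---that the set of formal series whose every monomial is of type $z^\kappa\theta^\alpha\eta^\beta$ forms a two-sided ideal $\mathcal{I}_{\kappa\alpha\beta}$, and that each constituent of the $\Z_2^2$-Berezinian formula (matrix inversion via Neumann series, Schur complement, determinant, scalar inverse) respects congruence modulo this ideal---is exactly what is needed. The invertibility of $J_0$ is handled correctly: since $(\kappa,\alpha,\beta)\neq(0,0,0)$, the body of $J$ survives in $J_0$, and invertibility in a formal-power-series algebra is detected at the body level. The convergence of the Neumann series is also justified, either by the increasing $z$-filtration when $\kappa\geq 1$ or by nilpotency of $\theta,\eta$ when $\alpha=1$ or $\beta=1$.

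Note, however, that the paper does not actually supply a proof of this proposition: it quotes the result from Poncin \cite[Theorem 3]{Poncin:2016} and explicitly writes ``We will not prove this here and defer the reader to Poncin.'' So there is no in-paper proof to compare against. Your ideal-theoretic argument is a clean self-contained proof of the statement; it is in the same spirit as Poncin's treatment (which also proceeds by tracking how the algebraic operations in the Berezinian interact with monomial types), but packaged more abstractly via the single observation that $\mathcal{I}_{\kappa\alpha\beta}$ is an ideal preserved under inversion of units. One small presentational point: your final sentence frames the conclusion as $\Z_2^2\Ber(J)-\Z_2^2\Ber(J_0)\in\mathcal{I}_{\kappa\alpha\beta}$, which is the collective statement for all monomials of the given type at once. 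This is precisely how the proposition is used downstream (to justify dropping such terms before computing the Berezinian), so the match with the intended application is exact.
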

We will not prove this here and defer the reader to Poncin.
\begin{proposition}
The $\Z_2^2$-Berezinian of the Jacobian $\Z_2^2\Ber(J)$, does not contain a term of the form $z f(t,s)$. 
\end{proposition}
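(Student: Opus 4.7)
The plan is to read the statement off the $\Z_2^2$-grading, bypassing both Proposition \ref{Prop:MonTypBer} and the explicit transformation rules recorded above. The key observation is that the Jacobian $J$ of a grading-preserving change of coordinates is itself a degree-$(0,0)$ matrix in the sense required by Definition \ref{Def:Z22Ber}. Indeed, since $\deg(x'^A) = \deg(x^A)$ and since in $\Z_2^n$-geometry the partial derivative $\partial/\partial x^B$ carries degree $\deg(x^B)$, each entry $J^A{}_B = \partial x'^A/\partial x^B$ has degree $\deg(x^A) + \deg(x^B)$, which is exactly the degree pattern of a grading-preserving endomorphism of the tangent module.

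Once this is in hand, Definition \ref{Def:Z22Ber} tells us that $\Z_2^2\Ber$ is a group homomorphism $\GL^0(\mathcal{A}) \to (\mathcal{A}^0)^\times$, so $\Z_2^2\Ber(J)$ lies in the degree-$(0,0)$ subalgebra of $C^\infty(\R^2)[[z,\theta,\eta]]$. Expanded as a formal power series in the generators $z,\theta,\eta$, it is therefore supported only on monomials of total $\Z_2^2$-degree $(0,0)$; and since $\deg(z) = (1,1)$, $\deg(\theta) = (0,1)$, $\deg(\eta) = (1,0)$, those monomials are exhausted by $z^{2k}$ and $z^{2k+1}\theta\eta$ with $k\ge 0$. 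A term of the form $z\,f(t,s)$ would have degree $(1,1)$, and so cannot occur.

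The only delicate point, and the one I would take care to spell out, is the identification in the first paragraph of ``degree zero matrix'' in the sense of Definition \ref{Def:Z22Ber} with the natural grading of the Jacobian of a $\Z_2^2$-coordinate change; after that the argument is a one-line parity count. It is worth noting that Proposition \ref{Prop:MonTypBer} on its own is \emph{not} sharp enough to settle the claim: one can check from the explicit transformation rules that individual entries of $J$ do produce pure $z$ monomials (for instance $\partial z'/\partial t$ contributes a term $z\,\partial_t\phi^z_{001}$), so the result really must rely on the degree-zero nature of $\Z_2^2\Ber$ as a homomorphism, rather than on a monomial-by-monomial type comparison.
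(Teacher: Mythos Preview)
Your argument is correct and considerably slicker than the paper's. The paper proceeds computationally: it invokes Proposition \ref{Prop:MonTypBer} to strip away from the coordinate change all terms that cannot influence a putative $z\,f(t,s)$ contribution, writes down the resulting simplified $5\times 5$ Jacobian, and then evaluates the two block determinants in $\Z_2^2\Ber(J) = \det(A-BD^{-1}C)(\det D)^{-1}$ by hand, observing that each is a formal power series in $z^2$. Your route avoids all of this by reading the conclusion directly off the codomain of the homomorphism in Definition \ref{Def:Z22Ber}: once $J \in \GL^0(\mathcal{A})$ is established, $\Z_2^2\Ber(J) \in (\mathcal{A}^0)^\times$ forces every monomial in its expansion to have $\Z_2^2$-degree $(0,0)$, and $z\,f(t,s)$ has degree $(1,1)$. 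The paper's approach has the virtue of being entirely self-contained (it does not invoke the target of the Berezinian homomorphism, only its defining formula), whereas yours imports that structural fact from \cite{Covolo:2012} and thereby gains generality and brevity.

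One small correction to your closing commentary: the claim that the result ``really must rely on the degree-zero nature of $\Z_2^2\Ber$'' overstates things. You are right that Proposition \ref{Prop:MonTypBer} alone is not enough, since entries such as $\partial z'/\partial t$ contribute genuine type-$z$ monomials; but the paper shows that Proposition \ref{Prop:MonTypBer} \emph{combined with} an explicit determinant computation suffices, so the degree argument is one clean way, not the only way.
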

\begin{proof}
We are looking for a term $z f(t,s)$, so we can drop terms in the $\Z_2^2$-Berezin of the form $z^k \theta^a \eta^b$ with $k \geq 2$, $a\geq 0$, $b \geq 0$, as well as terms  $\theta$ and $\eta$.  In light of Proposition \ref{Prop:MonTypBer}, we can drop terms in the coordinate transformations that generate terms exclusively of the type $z^2$,  $\theta$ and $\eta$. Thus, we need only consider
\begin{align*}
\phi^*t' = t'(t,s,\theta, \eta, z) & = \phi^t(t,s) + z^2 \phi^t_{002},\\
\phi^*s' = s'(t,s,\theta, \eta, z) & = \phi^s(t,s)+ z^2 \phi^s_{002},\\
\phi^*z' = z'(t,s,\theta, \eta, z) & =  z  \, \phi^z_{001}(t,s) ,\\
\phi^*\theta' = \theta'(t,s,\theta, \eta, z) & = \theta \,\phi_{010}^\theta(t,s) + z \eta  \, \phi^\theta_{101}(t,s) \\
\phi^*\eta' = \eta'(t,s,\theta, \eta, z) & =  \eta  \, \phi^\eta_{100}(t,s) + z \theta  \, \phi^\eta_{011}(t,s).
\end{align*}
Using  Proposition \ref{Prop:MonTypBer} again, we can drop terms in the Jacobi matrix that are $\cO(z^2)$ and  those that contain $\eta$ and $\theta$. Thus, we need only examine the following matrix,
$$\renewcommand\arraystretch{1.5}  \begin{pmatrix}
    \partial_t\phi^t & \partial_s\phi^t  & 2 z \phi^t_{002} & 0 &0   \\
   \partial_t\phi^s & \partial_s\phi^s  & 2 z \phi^s_{002} & 0&0 \\
     z\partial_t \phi^z_{001} & z\partial_s \phi^z_{001}  &  \phi^z_{001} &0 & 0\\
    0 & 0  &  0 & \phi^\theta_{010} & - z \phi^\theta_{101} \\
    0 & 0  &  0& - z \phi^{\eta}_{011}& \phi^\eta_{100} \\
\end{pmatrix}.
$$ 
From Definition \ref{Def:Z22Ber} it is clear that the determinant of the top-left block of the above matrix has no term linear in $z$, one can directly calculate this determinant. The determinant of the bottom-right block is $\phi^\theta_{010}\phi^\eta_{100}  - z^2 \, \phi^\theta_{101} \phi^\eta_{011}$. We need the inverse of this, which is a formal power series in $z$. It is well-known that $(1 -z^2)^{-1} =1 + z^2 + z^4 \cdots$ and so it is clear that the inverse of the determinant of the bottom-right block does not contain a term linear in $z$. Thus, $\Z_2^2\Ber(J)$ does not contain a $zf(t,s)$.\\
\end{proof}
\begin{theorem}
The Berezin integral of an integrable Berezinian section of $\R^{2|1,1,1}$ (once an orientation has been fixed) is well-defined. 
\end{theorem}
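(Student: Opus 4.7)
The goal is to show that the value $\int_{\R^2}\rmd t\,\rmd s\,\sigma_{110}(t,s)$ is independent of the chart used to compute it, once an orientation on the reduced manifold $\R^2$ has been fixed. My plan proceeds in three stages.

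First, I would verify that the data needed to define the integral behave well under an arbitrary coordinate change $\phi$. By Proposition \ref{Prop:NoZTermSigma} integrability of $\Bsigma$ is chart-independent, and by the preceding proposition $\Z_2^2\Ber(J)$ contains no term of the form $z\,f(t,s)$. Since the two Berezin density representations are related by $\sigma = \Z_2^2\Ber(J)\,\phi^*\sigma'$, these two inputs guarantee that extracting the coefficient of $\theta\eta$ at $z=0$ in $\sigma$ is unambiguous and agrees as a formal object with the corresponding coefficient in $\sigma'$ evaluated at $(t',s')$.

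Second, I would factor a general coordinate change into three elementary pieces and treat each in turn: (a) a pure reparametrization of $(t,s)$ with $(z,\theta,\eta)$ fixed, for which $\Z_2^2\Ber(J)$ collapses to the ordinary Jacobian $\det\partial(\phi^t,\phi^s)/\partial(t,s)$ and the classical change of variables on $\R^2$, together with the chosen orientation, gives the conclusion; (b) a diagonal rescaling of $z,\theta,\eta$ by functions of $(t,s)$, where the rescaling factor $\phi^z_{001}/(\phi^\theta_{010}\,\phi^\eta_{100})$ in $\Z_2^2\Ber(J)$ is exactly balanced by the compensating adjustment of the formal $z$-interval from $[0,1]$ to $[0,(\phi^z_{001})^{-1}]$ discussed just after the definition, together with the factor $\phi^\theta_{010}\,\phi^\eta_{100}$ contributed by $\theta'\eta'$ in $\sigma'$; and (c) the remaining nilpotent corrections coming from higher-order terms like $\phi^t_{002}$, $\phi^z_{110}$, $\phi^\theta_{101}$ and $\phi^\eta_{011}$.

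Third, and crucially, I would show the type (c) corrections do not alter the value of the integral. Here Proposition \ref{Prop:MonTypBer} is the decisive tool: it forces each monomial $z^\kappa\theta^\alpha\eta^\beta$ appearing in an entry of $J$ to contribute to $\Z_2^2\Ber(J)$ only monomials of the same type, so the cross terms in $\Z_2^2\Ber(J)\,\phi^*\sigma'$ that can feed the coefficient of $\theta\eta z^0$ form a tractable finite list. A direct computation then shows that these cross terms assemble into a total divergence in $(t,s)$; since $\Bsigma$ is compactly supported every component of $\sigma'$ is as well, so the divergence integrates to zero on $\R^2$.

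The main obstacle is precisely this third stage, namely, organizing the cross terms into a divergence. The structural reason it must work is that $\Z_2^2\Ber$ is tailor-made to render Berezin integration coordinate invariant, so the identities forcing those contributions to be exact are built into the definition. Making this explicit in the $\R^{2|1,1,1}$ setting, which differs from Poncin's treatment in \cite{Poncin:2016} only by the presence of a second degree-zero coordinate, is the computational heart of the argument.
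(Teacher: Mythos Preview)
Your approach differs substantially from the paper's, and your stage (c) is where the comparison is most instructive.

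The paper does \emph{not} factor the coordinate change into elementary pieces, and it never invokes a divergence argument. Instead it exploits a single observation: since the Berezin integral extracts only the coefficient of $\theta\eta$ at $z^0$, one may set $z=0$ throughout the integrand before computing anything. Concretely, one replaces $\phi^*\sigma'$ by its $z=0$ truncation, and then, via Proposition~\ref{Prop:MonTypBer}, also drops from the Jacobian every entry of type $z$ (i.e.\ every monomial carrying at least one factor of $z$). What remains is a single explicit $5\times5$ matrix whose $\Z_2^2$-Berezinian is computed directly from Definition~\ref{Def:Z22Ber}: one finds $\det(A-BD^{-1}C)=\phi^z_{001}\det\bigl(\partial(\phi^t,\phi^s)/\partial(t,s)\bigr)$ and $(\det D)^{-1}=(\phi^\theta_{010}\phi^\eta_{100})^{-1}$. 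The factor $\phi^z_{001}$ is absorbed by the formal $z$-interval rule $\int_0^{(\phi^z_{001})^{-1}}\rmd z\,\phi^z_{001}=1$, the factors $\phi^\theta_{010}\phi^\eta_{100}$ cancel against those coming from $\theta'\eta'$ in $\phi^*\sigma'$, and the classical change of variables on $\R^2$ finishes the proof. No residual terms survive, so there is nothing to organise into a total divergence.

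Your route is the traditional one from ordinary supergeometry, where nilpotent shifts of the even coordinates do produce genuine boundary terms that vanish only after integrating by parts against compact support. Here that mechanism is not needed: once the ``set $z=0$'' reduction is made, the cross terms you anticipate in stage (c) are not merely exact, they are absent. So your plan is not wrong in spirit, but it builds machinery for a cancellation that the paper shows never arises; and because $z$ is a non-nilpotent formal variable, carrying out your divergence argument rigorously would in any case require extra care beyond the standard $\R^{p|q}$ story. The paper's shortcut is both simpler and more robust.
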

\begin{proof}
We need to show that the Berezin integral of an integrable Berezinian section is independent of the coordinate system. In particular,
\begin{align*}& \int \mathrm{D}[t,s,z,\theta, \eta] \, \sigma(t,s,z,\theta, \eta)\\ & =  \int\mathrm{D}[t,s,z,\theta, \eta] \, \Z_2^2\textrm{Ber}\left(\frac{\partial (t',s',z',\theta', \eta')}{\partial(t,s,z,\theta, \eta) } \right) \, \sigma'\big(t'(t,s, \theta, \eta), s'(t,s, \theta, \eta), z'(t,s, \theta, \eta), \theta'(t,s, \theta, \eta), \eta'(t,s, \theta, \eta) \big).
\end{align*}
Due to the definition of the Berezin integral, we can ignore terms of the type $z$ in the integrand. In effect, we can set $z=0$ in the integrand. So, we can restrict attention to functions of the form
$$\sigma' = \sum \big(\theta \eta \phi^z_{110}(t,s) \big)^\kappa\big(\theta \phi^\theta_{010}(t,s) \big)^\alpha \big(\eta \phi^\eta_{100}(t,s) \big)^\beta \, \sigma'_{\beta \alpha \kappa}(\phi^t(t,s), \phi^s(t,s)).$$
Following Proposition \ref{Prop:MonTypBer} we can drop terms of type $z$ when calculating the $\Z_2^2$-Berezinian. Thus, we can consider the following simplified Jacobian matrix
\[\renewcommand\arraystretch{1.5} 
  \begin{pmatrix}
    \partial_t \phi^t & \partial_s\phi^t  & \theta \eta \, \phi^t_{110} &  0 &0   \\
    \partial_t \phi^s & \partial_s\phi^s  & \theta \eta \, \phi^s_{110} & 0 &0   \\
    \theta\eta \, \partial_t \phi^z_{110} & \theta\eta \, \partial_s \phi^z_{110}  &  \phi^z_{001} & \eta\, \phi^z_{110} & \theta\, \phi^z_{110} \\
    \theta \, \partial_t \phi^\theta_{010} & \theta \, \partial_s \phi^\theta_{010}  &  \eta \, \phi^\theta_{101} & \phi^\theta_{010} & 0 \\
    \eta \, \partial_t \phi^\eta_{100} & \eta\, \partial_s \phi^\eta_{100}  &  \theta\, \phi^\eta_{011}& 0 & \phi^\eta_{100} 
\end{pmatrix}
\]
Direct calculation gives
$$\det(A-BD^{-1}C) =  \phi^z_{001} \, \det \begin{pmatrix} \partial_t \phi^t & \partial_s \phi^t \\ \partial_t \phi^s & \partial_s \phi^s \end{pmatrix},$$
and 
$$(\det D)^{-1} = \frac{1}{\phi^\theta_{010}  \phi^\eta_{100}}.$$
Then, remembering we have assumed integrability, 
\begin{align*} 
&\int \mathrm{D}[t,s,z,\theta, \eta] \, \frac{\phi^z_{001} }{\phi^\theta_{010}\phi^\eta_{100}}   \, \det \begin{pmatrix} \partial_t \phi^t & \partial_s \phi^t \\ \partial_t \phi^s & \partial_s \phi^s \end{pmatrix}  ~ \theta \, \phi^\theta_{010}   \, \eta\, \phi^\eta_{100} \, \sigma'_{110}(\phi^t(t,s), \phi^s(t,s))\\
&= \int_{\R^2} \rmd t \rmd s \, \left(\int\rmd z  \; \phi^z_{001}\right) ~ \det \begin{pmatrix} \partial_t \phi^t & \partial_s \phi^t \\ \partial_t \phi^s & \partial_s \phi^s \end{pmatrix}~ \sigma'_{110}(\phi^t(t,s), \phi^s(t,s))\\
& = \int_{\R^2} \rmd t \rmd s  ~ \det \begin{pmatrix} \partial_t \phi^t & \partial_s \phi^t \\ \partial_t \phi^s & \partial_s \phi^s \end{pmatrix}~ \sigma'_{110}(\phi^t(t,s), \phi^s(t,s))\\
&= \int_{\R^2} \rmd t \rmd s  \, \sigma_{110}(t, s).
\end{align*}
\end{proof}
\end{document}